\newtheorem{theorem}{Theorem}
\newtheorem{corollary}{Corollary}
\newtheorem{definition}{Definition}
\newtheorem{example}{Example}
\newtheorem{lemma}{Lemma}
\newtheorem{remark}{Remark}
\newenvironment{proof}[1][Proof]{\emph{#1.} }{\  \hfill $\square $ \vspace{5 pt}}
\begin{document}

\title{Variable population manipulations of 
reallocation rules in economies with single-peaked preferences\footnote{I thank Pablo Arribillaga, Jordi Mass\'o, William Thomson, the Associate Editor, and two anonymous referees for very detailed comments.  Of course, all errors are my own. I acknowledge financial support from Universidad Nacional de San Luis  through grants 032016 and 030320, from Consejo Nacional
de Investigaciones Cient\'{\i}ficas y T\'{e}cnicas (CONICET) through grant
PIP 112-200801-00655, and from Agencia Nacional de Promoción Cient\'ifica y Tecnológica through grant PICT 2017-2355. Part of this work was written during my visit at the Department of Economics, University of Rochester, under a Fulbright scholarship.}}

\author{Agustín G. Bonifacio\footnote{Instituto de Matemática Aplicada San Luis (UNSL-CONICET), Av. Italia 1556,  5700, San Luis, Argentina. E-mail: \href{mailto:agustinbonifacio@gmail.com}{ \texttt{abonifacio@unsl.edu.ar}}}}

\date{\today}

\maketitle

\begin{abstract}

In a one-commodity economy  with single-peaked preferences and individual endowments, we study different ways in which reallocation rules can be strategically distorted by affecting the set of active agents. We introduce and characterize the family of \emph{iterative} reallocation rules  and show that each rule in this class  is \emph{withdrawal-proof} and \emph{endowments-merging-proof}, at least one is \emph{endowments-splitting-proof} and that no such rule  is \emph{pre-delivery-proof.} 
\vspace{5 pt}

\noindent
\emph{Journal of Economic Literature}  Classification Numbers: D63, D71, D82 

\noindent 
\emph{Keywords:} single-peakedness, withdrawal-proofness, endowments-merging-proofness, endowments-splitting-proofness, pre-delivery-proofness.

\end{abstract}

\vspace{10 pt}

\section{Introduction}
This paper studies a variable population model of an economy consisting of one non-disposable commodity and a set of agents with  individual endowments of that commodity. In this context, we investigate different ways in which reallocation rules, i.e.,  systematic ways of selecting reallocations for each possible configuration of agents' preferences and endowments, can be strategically distorted by affecting the set of active agents. We limit our analysis  to the case where agents' preferences are single-peaked: up to some critical level, called the peak, an increase in an agent's consumption raises her welfare; beyond that level, the opposite holds. This model has been extensively studied \citep[see, for example,][]{Bonifacio15,Klaus97,Klaus98}. We allow for variable population as in \cite{Moreno96,Moreno02}.\footnote{One-commodity economies with single-peaked preferences and \emph{a social endowment} are studied, for example, by \cite{Sprumont91} assuming a fixed population. The first studies of the extension of this model to allow  for variable populations are \cite{thomson1994consistent, thomson1995population}.} To illustrate this type of problem, consider the distribution of a task (e.g., teaching hours) among the members of a group with concave disutility of labor (which induces single-peaked preferences over the time they dedicate to work). From one period to the next one, external factors (e.g., research and administrative duties) might affect preferences and a reallocation of the time assigned to each agent in the first period (taken as a benchmark for the second period) could benefit everyone. Another application of this model is a pollution problem in which countries have different rates of pollution and could trade via money transfers their pollution quotas.

Our analysis will be conducted over reallocation rules which are \emph{own-peak-only} (the sole information collected by the rule from an agent's preference to determine her reallocation is her peak amount) and meet the \emph{endowments lower bound} (no agent is made worse off than at her endowment). Two monotonicity properties are appealing in this model. First, 
a population monotonicity. Since variable population is allowed, it is natural to ask for  a monotonicity condition  requiring the arrival of new agents to affect all agents present before the change in the same direction. Adding the proviso that agents entering the economy do not change the sign of the excess demand of the economy, we get the property of \emph{one-sided population monotonicity}  \citep[see][]{thomson1995population}. Second, a resource monotonicity. If, in case of excess demand, the individual endowments decrease (or increase in case of excess supply), then no individual is better off after the change. We call this property \emph{one-sided endowments monotonicity} \citep[see][]{thomson1994resource}.

Our first result is a characterization of the family of reallocation rules that  satisfy the four previously mentioned properties (Theorem \ref{sequential}). This family of reallocation rules, which we call  ``iterative'', resembles the family of weakly sequential reallocation rules \citep[][]{Bonifacio15} in that each rule in the family can be described by a step-by-step procedure that, at each stage, guarantees levels of consumption to the agents which are adjusted throughout  the process. Weakly sequential reallocation rules, in turn, follow closely the definition of the sequential rules presented in \cite{Barbera97} for economies with single-peaked preferences and a social endowment to be allotted.\footnote{\cite{ehlers2002strategy} extends the rules of \cite{Barbera97} to the domain of single-plateaued preferences.} Our characterization turns out to be different from \cite{Barbera97}'s since their characterization of sequential rules marries \emph{efficiency} (allocations chosen by the rule satisfy the Pareto criterion) with \emph{strategy-proofness} (no agent ever has incentives to misrepresent preferences) and \emph{replacement monotonicity} (if a change in one agent's preferences results in that agent receiving at least as much as before, then all other agents receive at most as much as before). Iterative reallocation rules are always \emph{efficient} (Lemma \ref{lemaEMimpliesEFF}) but need not be \emph{strategy-proof} nor \emph{replacement monotonic}.\footnote{See Appendix \ref{segundo} for more details.}

Most of the literature on asymmetric rationing with single-peaked preferences  has been focused on a specific property of immunity to manipulation:  \emph{strategy-proofness}. \cite{moulin1999rationing} characterizes the family of rationing rules along fixed paths by means of \emph{efficiency}, \emph{strategy-proofness}, a \emph{resource monotonicity}, and \emph{consistency} \citep[see also][]{ehlers2002resource}.\footnote{See also \cite{moulin2017one}, and references therein, for a general treatment of strategy-proof allocation with single-peaked preferences.}  We deliberately depart from that stance. Instead, and following \cite{thomson2014new},  we examine the robustness of iterative reallocation rules to various types of manipulations by  affecting the set of active agents. The manipulations we consider are the following:

\begin{enumerate}[(i)]

\item Instead of participating, an agent withdraws with her endowment.  The rule is applied without her.  She then trades with one of the agents that did participate the resources they control between the two of them  in such a way  that both end up better off.  A rule immune to this type of manipulation is called \emph{withdrawal-proof}. 

\item Two agents can merge their endowments, and one of them withdraw. The rule is applied without this second agent. The agent who stays may be assigned an amount that can be divided between the two  in such a way that both become at least as well off as they would have been without the manipulation, and at least one of them is better off. A rule immune to this type of manipulation is called \emph{endowments-merging-proof}.

\item An agent may split her endowment with some outsider (an agent with no endowment). The rule is applied and the guest then transfers her assignment to the agent who invited her in.  The first agent may prefer her final assignment to what she would have received without the manipulation. A rule immune to this type of manipulation is called \emph{endowments-splitting-proof}. 

\item An agent may make a pre-delivery to some other agent of the trade the latter would  be assigned if she participated. The rule is applied without the second agent. At her final assignment, the first agent may be better off than she would have been without the manipulation. A rule immune to this type of manipulation is called \emph{pre-delivery-proof}.

\end{enumerate}

It turns out that all iterative rules satisfy both \emph{withdrawal-proofness} and \emph{endowments-merging-proofness} (Corollary \ref{cor} and  Lemma \ref{Th emp}). \emph{Endowments-splitting-proofness} is satisfied by the proportional reallocation rule (Remark \ref{propprop}). Finally,  no iterative reallocation rule satisfies \emph{pre-delivery-proofness} (Corollary \ref{corpre}).

 The rest of the paper is organized as follows. In Section \ref{preliminar} the model and some basic properties of reallocation rules are presented. In Section \ref{sectionsequential}, iterative reallocation rules are defined and characterized. The different variable population manipulations are discussed in Section \ref{sectionvariable}. Final comments are gathered in Section \ref{sectionfinal}.

\section{Preliminaries}\label{preliminar}

\subsection{Model}
 We consider the set of natural numbers $\mathbb{N}$ as the set of  \textbf{potential agents.} Denote by $\mathcal{N}$ the collection of all finite subsets of   $\mathbb{N}.$ Each $i \in \mathbb{N}$ is characterized by an endowment 
 $\omega_i \in \mathbb{R}_+$ of the good and a continuous preference relation $R_i$ defined over
$\mathbb{R}_+$. Call $P_i$ and $I_i$ to the strict preference and indifference relations associated with $R_i,$ respectively. We assume that agents' preferences are \textbf{single-peaked}, i.e.,
each $R_i$ has a unique maximum $p(R_i) \in \mathbb{R}_+$  such that, for each pair $\{x_i, x_i'\} \subset\mathbb{R}$, we have $x_iP_ix_i'$ as long as either $x_i'<x_i\leq p(R_i)$ or $p(R_i) \leq x_i<x_i'$ holds. Denote by $\mathcal{R}$ the domain of
single-peaked preferences defined on $\mathbb{R}_+$. Given $N \in \mathcal{N},$ 
an  \textbf{economy} 
consists of a profile of preferences $R \in \mathcal{R}^N$ and an
individual endowments 
vector $\omega = (\omega_i)_{i \in N}\in \mathbb{R}^N_{+}$, and is
denoted by $e=(R,\omega)$. If $S \subset N$ and $R \in
\mathcal{R}^N,$ let $R_S=(R_j)_{j \in S}$ denote the restriction of
$R$ to $S.$ We often write $N\setminus S$ by $-S$.  Using similar notation for the vector of endowments,  $e'=(R_S',R_{-S}, \omega_S',\omega_{-S})$ stands for the
economy where the preference and endowment of agent $i \in S$ are $R_i'$ and
$\omega_i',$  and those of agent $i \notin S$ are $R_i$
and $\omega_i.$  Let 
$\mathcal{E}^N$ be the domain of economies with agents in $N$. Given $e=(R,\omega) \in \mathcal{E}^N,$  let $z(e)=\sum_{j \in N}(p(R_j)-\omega_j).$ If $z(e) \geq 0$ we say that economy $e$ has \textbf{excess demand} whereas if $z(e)<0$ we say that economy $e$ has \textbf{excess supply}. Let  $\mathcal{E}=\bigcup_{N \in \mathcal{N}} \mathcal{E}^{N}$ denote the set of all potential economies. For each $N \in \mathcal{N}$ and each $e \in \mathcal{E}^N,$ let $X(e)=\{x \in  \mathbb{R}^{N}_+ : \sum_{j\in N}x_j=\sum_{j \in N}\omega_j\}$ be the set of \textbf{reallocations} for economy $e,$ and let $X=\bigcup_{e \in \mathcal{E}}X(e).$ A  \textbf{reallocation rule} is a function $\varphi: \mathcal{E} \to X$ such that $\varphi(e) \in X(e)$ for each $e \in \mathcal{E}.$  For each $N \in \mathcal{N},$ each $i\in N,$ and each $e \in \mathcal{E}^N,$ let $\Delta \varphi_i(e)=\varphi_i(e)-\omega_i$ be \textbf{agent $\boldsymbol{i$'s net trade at $e}$}.

\subsection{Basic properties}\label{basic properties}

The next informational simplicity property states that if an  agent unilaterally changes   her preference for another one with the same peak, then  her allotment remains unchanged.   

\vspace{5 pt}

\noindent
\textbf{Own-peak-only:} For each $e=(R, \omega) \in \mathcal{E}^N,$ each $i \in N,$ and each $ R_i' \in  \mathcal{R}$ such that $p(R_i')=p(R_i),$ if $e'=(R_i', R_{N\setminus \{i\}}, \omega)$ then $\varphi_i(e')=\varphi_i(e).$  
\vspace{5 pt}

\noindent This property is weaker than the ``peak-only'' property,\footnote{See Appendix \ref{segundo} for a discussion on this fact.} that has been imposed in a number of axiomatic studies. Analyzing the uniform rule, \cite{Sprumont91} derives the own-peak-only property from other axioms \citep[see also][]{Ching92,Ching94}.

The following property requires respecting ownership of the resource, and also can be seen as giving incentive to participate in the exchange process. It says that no agent can get a reallocation that she finds worse than her endowment.

\vspace{5 pt}
\noindent
\textbf{Endowments lower bound:}  For each $e=(R, \omega) \in \mathcal{E}^N,$ and each $i \in N,$ $\varphi_i(e)R_i \omega_i.$\footnote{This property is commonly known in the literature as \emph{individual rationality.}} 
\vspace{5 pt}

Next, we present our two monotonicity properties. The first one requires that as population enlarges, and the new resources and preferences considered are not as disruptive as to modify the status of the economy from excess demand to excess supply or vice versa,  the welfare of each of the initially present agents should move in the same direction.

\vspace{5 pt}
\noindent
\textbf{One-sided population monotonicity:} For each $e=(R, \omega) \in \mathcal{E}^N,$ each $N' \subset N,$ and each $e'=(R_{N'}, \omega_{N'}) \in \mathcal{E}^{N'},$  $z(e) \ z(e') \geq 0$ implies either $\varphi_i(e) R_i \varphi_i(e')$ for each $i \in N'$ or $\varphi_i(e')R_i \varphi_i(e)$ for each $i \in N'$.
\vspace{5 pt}

\noindent The second one requires all agents to benefit from a favorable change in the amount to allocate. 
Given two vectors $x, y \in \mathbb{R}^N,$ define $x \geqslant y$ if and only if $x_i \geq y_i$ for each $i \in N.$  

\vspace{5 pt}
\noindent
\textbf{One-sided endowments monotonicity:} For each $e=(R, \omega) \in \mathcal{E}^N,$ and each $\omega' \in \mathbb{R}^N$ such that $\omega' \geqslant \omega,$  if $e'=(R, \omega'),$ then $z(e') \geq 0$ implies $\varphi_i(e')R_i\varphi_i(e)$ for each $i \in N,$ and $z(e) \leq 0$ implies $\varphi_i(e)R_i\varphi_i(e')$ for each $i \in N$.
\vspace{5 pt}

The usual Pareto optimality property states that, for each economy, the reallocation selected by the rule should be such that there is no other reallocation that all agents find at least as desirable and at least one agent prefers. In this model, it is equivalent to the following same-sidedness condition:

\vspace{5 pt}
\noindent
\textbf{Efficiency:} For each $e=(R, \omega) \in \mathcal{E}^N, $ $z(e) \geq 0$ implies  $\varphi_i(e)\leq p(R_i)$ for each $i \in N,$
and  $z(e) \leq 0$ implies  $\varphi_i(e)\geq p(R_i)$ for each $i \in N.$
\vspace{5 pt}

\begin{lemma}\label{lemaEMimpliesEFF}
Each own-peak-only and one-sided endowments monotonic reallocation rule that  meets the endowments lower bound  is efficient. 
\end{lemma}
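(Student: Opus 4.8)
## Proof Plan

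The plan is to argue by contradiction, exploiting the fact that an inefficient reallocation must give some agent more than her peak (in the excess-demand case) while another agent gets less than hers, and then construct a favorable endowment change that forces a welfare contradiction via one-sided endowments monotonicity. First I would fix an economy $e=(R,\omega) \in \mathcal{E}^N$ with $z(e) \geq 0$ (the case $z(e) \leq 0$ being symmetric) and suppose, for contradiction, that $\varphi_j(e) > p(R_j)$ for some agent $j \in N$. Since $\sum_{i \in N}\varphi_i(e) = \sum_{i \in N}\omega_i$ and $z(e) = \sum_{i\in N}(p(R_i)-\omega_i) \geq 0$, a counting argument shows there must exist an agent $k \in N$ with $\varphi_k(e) < p(R_k)$; in particular $\varphi_k(e) < p(R_k)$ forces $\omega_k < p(R_k)$ is not automatic, but we do get that not everyone can be at or above their peak, so such a $k$ exists.

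Next I would use own-peak-only to replace $R_j$ by a preference $R_j'$ with $p(R_j') = p(R_j)$ but which is ``steeply decreasing'' past the peak — concretely, chosen so that $\omega_j \mathrel{R_j'} \varphi_j(e)$ fails, i.e. so that the endowments lower bound becomes a genuine constraint at $j$. By own-peak-only, $\varphi_j(e') = \varphi_j(e)$ where $e' = (R_j', R_{N\setminus\{j\}}, \omega)$, and $z(e') = z(e) \geq 0$ since the peak is unchanged. Now I would raise $j$'s endowment: set $\omega_j' = \varphi_j(e) = \varphi_j(e')$ and $\omega_i' = \omega_i$ for $i \neq j$, so $\omega' \geqslant \omega$. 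The key point is to check $z(e'') \geq 0$ where $e'' = (R_j', R_{N\setminus\{j\}}, \omega')$: we have $z(e'') = z(e') - (\omega_j' - \omega_j) = z(e') - (\varphi_j(e) - \omega_j)$, and this need not be nonnegative, which is the main obstacle (see below). By endowments lower bound at $e''$, $\varphi_j(e'') \mathrel{R_j'} \omega_j' = \varphi_j(e)$; combined with one-sided endowments monotonicity (which gives $\varphi_j(e'') \mathrel{R_j'} \varphi_j(e')$, the wrong direction) — so the contradiction has to come from looking at agent $k$ rather than $j$, or from a more careful choice.

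The hard part will be orchestrating the endowment perturbation so that (a) the status of the economy ($z \geq 0$) is preserved, so that one-sided endowments monotonicity applies, and (b) the perturbation is genuinely favorable, so monotonicity yields ``everyone weakly better off,'' while (c) the endowments lower bound at the new economy pins some agent's consumption from below in a way incompatible with (b). The clean way to do this is to only increase $\omega_k$ (the agent below her peak) by a small amount $\varepsilon > 0$ small enough that $\varphi_k(e) + \varepsilon$ is still below $p(R_k)$ and small enough that $z$ stays nonnegative is automatic since increasing an endowment decreases $z$ — wait, that breaks $z \geq 0$; instead, the correct move is to decrease $\omega_j$ toward $p(R_j)$, but that is the opposite of $\omega' \geqslant \omega$. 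The resolution, which I would pursue, is: keep $z(e) \geq 0$ fixed throughout by using own-peak-only freely, and instead apply one-sided endowments monotonicity with the roles arranged so that the ``favorable'' change raises $\omega$ and the resulting unanimous weak improvement, together with $\sum \varphi_i$ being pinned down by feasibility and the peaks not having moved, forces $\varphi_j$ strictly down past $p(R_j)$ back toward the peak — contradicting that $\varphi_j(e)$ was already a fixed point under own-peak-only. I would then conclude that $\varphi_i(e) \leq p(R_i)$ for all $i$, and invoke the symmetric argument for $z(e) \leq 0$, which establishes efficiency.
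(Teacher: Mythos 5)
Your proposal assembles the right ingredients --- \emph{own-peak-only} used to make the \emph{endowments lower bound} bite via a steeply-falling preference past the peak, plus an endowment perturbation fed into \emph{one-sided endowments monotonicity} --- but it never closes the argument, and the step at which you give up is precisely where the proof lives. Having fixed $j$ with $\varphi_j(e)>p(R_j)$ and $z(e)\geq 0$, the correct perturbation is the one you consider and then discard: lower $\omega_j$ to $p(R_j)$. You reject it because it is ``the opposite of $\omega'\geqslant\omega$,'' but the axiom compares any ordered pair of economies; taking the axiom's smaller-endowment economy to be the perturbed one $e'=(R,\omega_j',\omega_{-j})$ with $\omega_j'=p(R_j)$ and its larger-endowment economy to be the original $e$, the hypothesis $z(e)\geq 0$ yields $\varphi_j(e)\,R_j\,\varphi_j(e')$. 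Meanwhile the \emph{endowments lower bound} at $e'$ forces $\varphi_j(e')=p(R_j)$, which agent $j$ strictly prefers to $\varphi_j(e)>p(R_j)$ --- the desired contradiction. For this to be a legitimate decrease one must first establish $p(R_j)<\varphi_j(e)\leq\omega_j$; that is where the steep-preference device is actually needed (if $\omega_j<p(R_j)$, replace $R_j$ by a same-peak preference under which $\omega_j$ is strictly preferred to $\varphi_j(e)$ and contradict the lower bound via \emph{own-peak-only}; if $p(R_j)\leq\omega_j<\varphi_j(e)$, single-peakedness alone gives the contradiction). You deploy that device, but toward raising $\omega_j$ to $\varphi_j(e)$, which, as you yourself note, can flip the sign of $z$ and feeds monotonicity in the unusable direction.

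Your closing paragraph does not repair this. The claim that unanimous weak improvement plus feasibility ``forces $\varphi_j$ strictly down past $p(R_j)$ back toward the peak'' is not derived from anything stated, and \emph{own-peak-only} does not make $\varphi_j(e)$ a ``fixed point'' under endowment changes --- it only constrains responses to same-peak changes of preference. The auxiliary agent $k$ with $\varphi_k(e)<p(R_k)$, while it does exist, plays no role in a correct argument. As written, the proposal is a record of attempted perturbations, each of which you correctly identify as failing, followed by an unsubstantiated assertion that some arrangement works; the gap is genuine.
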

\begin{proof}
Let $\varphi$ be an \emph{own-peak-only, one-sided endowments monotonic} rule that meets the \emph{endowments lower bound}, and assume $\varphi$ is not \emph{efficient}. Then, there is $e=(R, \omega) \in \mathcal{E}^N$, 
  that without loss of generality we assume $z(e)\geq 0$, and 
 $i \in N$ such that $\varphi_i(e)>p(R_i).$ This implies that 
\begin{equation}\label{lemEM1}
\varphi_i(e) \leq \omega_i.    
\end{equation}
 To see that \eqref{lemEM1} holds, first assume $p(R_i)\leq \omega_i < \varphi_i(e)$. By single-peakedness, $\omega_iP_i\varphi_i(e)$, contradicting the \emph{endowments lower bound}. Second, assume $\omega_i<p(R_i)$. Let $\widetilde{R}_i \in \mathcal{R}$ be such that $p(\widetilde{R}_i)=p(R_i)$, and $\omega_i\widetilde{P}_i\varphi_i(e)$ and let $\widetilde{e}=(\widetilde{R}_i, R_{-i}, \omega)$. By the \emph{own-peak-only} property, $\varphi_i(\widetilde{e})=\varphi_i(e).$ Hence, $\omega_i\widetilde{P}_i\varphi_i(\widetilde{e})$, contradicting the \emph{endowments lower bound}. Therefore, \eqref{lemEM1} holds. Next, let $\omega_i' \in \mathbb{R}_{+}$ be such that $\omega_i'=p(R_i)$, and let $e'=(R,\omega_i', \omega_{-i}).$ By the \emph{endowments lower bound}, $\varphi_i(e')=p(R_i)$ and therefore  $\varphi_i(e')P_i\varphi_i(e)$, contradicting \emph{one-sided endowments monotonicity} since $z(e')\geq 0$.  
\end{proof}

The following result  will be  useful in the rest of the paper.

\begin{lemma}\label{elb}
Let $\varphi$ be an efficient and own-peak-only  reallocation rule that meets the endowments lower bound. Let $e=(R, \omega) \in \mathcal{E}^N$ and $i \in N.$ If either $z(e) \geq 0 \text{ and } p(R_i) \leq \omega_i $, or $z(e) \leq 0\text{ and } p(R_i) \geq \omega_i$,  then $\varphi_i(e)=p(R_i).$ 
\end{lemma}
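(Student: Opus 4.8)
The plan is to use efficiency to locate $\varphi_i(e)$ on the correct side of $p(R_i)$, and then to use own-peak-only together with the endowments lower bound to kill any remaining gap between $\varphi_i(e)$ and $p(R_i)$. The two hypotheses are mirror images of one another, so I would write out only the case $z(e)\geq 0$ and $p(R_i)\leq\omega_i$ and note that the case $z(e)\leq 0$ and $p(R_i)\geq\omega_i$ is symmetric.

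In the first case, $z(e)\geq 0$ and efficiency give $\varphi_i(e)\leq p(R_i)$, hence $\varphi_i(e)\leq p(R_i)\leq\omega_i$. Suppose, toward a contradiction, that $\varphi_i(e)<p(R_i)$; then in particular $\varphi_i(e)<\omega_i$. Exactly as in the proof of Lemma~\ref{lemaEMimpliesEFF}, I would pick $\widetilde{R}_i\in\mathcal{R}$ with $p(\widetilde{R}_i)=p(R_i)$ such that $\omega_i\,\widetilde{P}_i\,\varphi_i(e)$: this is possible because $\varphi_i(e)<p(R_i)\leq\omega_i$, so it suffices to take the left branch of $\widetilde{R}_i$ steep enough relative to its right branch (and when $\omega_i=p(R_i)$ any $\widetilde{R}_i$ with peak $p(R_i)$ works, since then $\omega_i=p(R_i)>\varphi_i(e)$). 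Letting $\widetilde{e}=(\widetilde{R}_i,R_{-i},\omega)$, own-peak-only gives $\varphi_i(\widetilde{e})=\varphi_i(e)$, so $\omega_i\,\widetilde{P}_i\,\varphi_i(\widetilde{e})$, contradicting the endowments lower bound at $\widetilde{e}$. Therefore $\varphi_i(e)=p(R_i)$.

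In the remaining case, $z(e)\leq 0$ and efficiency give $\varphi_i(e)\geq p(R_i)\geq\omega_i$; if $\varphi_i(e)>p(R_i)$, the same construction --- now choosing $\widetilde{R}_i$ with a sufficiently steep right branch so that $\omega_i\,\widetilde{P}_i\,\varphi_i(e)$ --- combined with own-peak-only and the endowments lower bound yields a contradiction, so $\varphi_i(e)=p(R_i)$. I do not expect a real obstacle here: the only delicate point is the existence of the auxiliary preference $\widetilde{R}_i$, and that is handled just as in Lemma~\ref{lemaEMimpliesEFF}, so this lemma is essentially a one-sided sharpening of that argument.
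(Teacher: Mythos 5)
Your proof is correct and follows essentially the same route as the paper's: efficiency places $\varphi_i(e)$ on the correct side of the peak, and then an auxiliary preference $\widetilde{R}_i$ with the same peak but $\omega_i\,\widetilde{P}_i\,\varphi_i(e)$, combined with own-peak-only, contradicts the endowments lower bound (the paper realizes this preference via an indifference point $x_i\,I_i'\,\omega_i$ with $\varphi_i(e)<x_i<p(R_i)$, which is the same construction you describe as steepening one branch). The only cosmetic difference is that the paper disposes of the boundary sub-case $p(R_i)=\omega_i$ directly from the endowments lower bound before the construction, whereas you absorb it into the construction; both are fine.
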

\begin{proof}
Let $\varphi$ satisfy the properties in the lemma and let $e \in \mathcal{E}^N$ and $i \in N$ . Assume $z(e) \geq 0$ and $p(R_i) \leq \omega_i.$ Since $z(e)\geq 0,$ by \emph{efficiency,} $\varphi_i(e) \leq p(R_i).$ If $p(R_i)=\omega_i$,  $\varphi_i(e)=p(R_i)$ by the \emph{endowments lower bound}.  Suppose, then, that $p(R_i)<\omega_i$ and  $\varphi_i(e) < p(R_i).$ Let $R_i' \in \mathcal{R}$ and $x_i \in \mathbb{R}_+$ be such that $p(R_i')=p(R_i),$ $\varphi_i(e)<x_i<p(R_i)$ and $x_i I_i' \omega_i.$  Let $e'=(R_i', R_{N\setminus\{i\}}, \omega).$ By the \emph{own-peak-only} property, $\varphi_i(e')=\varphi_i(e).$ Then, $\omega_i P_i' \varphi_i(e'),$ contradicting the \emph{endowments lower bound.} A similar reasoning establishes the same conclusion when $z(e) \leq 0\text{ and } p(R_i) \geq \omega_i.$ 
\end{proof}

\section{Iterative reallocation rules}\label{sectionsequential}

In this section we present a well-behaved family of reallocation rules. They resemble the weakly sequential reallocation rules\footnote{These rules, in turn, follow closely the definition of the sequential rules presented in \cite{Barbera97} for economies with a social endowment to be allotted.}   \citep[][]{Bonifacio15}  in that both families of rules are defined through an easy step-by-step procedure. The two families, although overlap, are different. This is discussed thoroughly in Appendix \ref{segundo}.

\subsection{Definition}

For each $N \in \mathcal{N}$ and  each $e=(R, \omega)\in \mathcal{E}^N,$ let $Q(e)\equiv \{q \in \mathbb{R}^N : \sum_{j \in N}q_j=0 \text{ and }\omega+q \geqslant 0\}$ be the possible net trades of endowments in economy $e$ and let $Q\equiv \bigcup_{e \in \mathcal{E}}Q(e).$ Next, define $\mathcal{Q}\equiv \{(q,e)\in Q\times \mathcal{E} : q \in Q(e)\}.$ Each element of $\mathcal{Q}$ specifies a net trade in a particular economy. An iterative reallocator is a function that, for each $N \in \mathcal{N}$ and  each economy $e=(R,\omega)\in \mathcal{E}^N$,  starting from the individual endowments of the agents (i.e., from a net trade $q_i^0$ equal to zero for each agent $i \in N$), generates  iteratively  a sequence of net trades $q^0, q^1, \ldots, q^{|N|-1}.$ Its repeated application is constrained to follow monotonic features.

\begin{definition}\label{reallocator} An \textbf{iterative reallocator} is a function $g: \mathcal{Q} \to \mathcal{Q}$ such that $g(q,e) \in \mathcal{Q}(e)$ and, for each $N \in \mathcal{N},$ and each $e=(R, \omega)\in \mathcal{E}^N,$  if $(q^t, e)=g(q^{t-1}, e)\equiv g^{t}(0,e),$  then:\footnote{Here $g^{t}$ denotes $g$ compose with itself $t$ times.} 
\begin{enumerate}[(i)]

\item for each $i \in N,$ and each $t\geq 1,$ 
$$q^t_i=p(R_i)-\omega_i\text{ \   whenever    \    } \begin{cases}
z(e)\geq 0 \text{ \ and \ } p(R_i)\leq \omega_i+q_i^{t-1}\\
z(e) <0 \text{ \ and  \ } p(R_i)\geq \omega_i+q_i^{t-1}.\\
\end{cases}
$$

\item for each $i \in N,$ and each $t\geq 1,$ $$q^t_i \geq q^{t-1}_i  \text{ \   whenever \  }z(e) \geq 0 \text{ and }p(R_i)>\omega_i+q_i^{t-1}$$
$$q^t_i \leq q^{t-1}_i  \text{ \   whenever \  }z(e) < 0 \text{ and }p(R_i)< \omega_i+q_i^{t-1}.$$

\item   for each $\widetilde{e}=(R,\widetilde{\omega})\in \mathcal{E}^N$ such that $\widetilde{\omega} \geqslant \omega$ and $(\widetilde{q}^{|N|-1},\widetilde{e})=g^{|N|-1}(0,\widetilde{e}),$    $$\widetilde{\omega}+\widetilde{q}^{|N|-1}\geq \omega+q^{|N|-1} \text{ \ whenever \ } z(e) \leq 0 \text{ \ or \ } z(\widetilde{e}) \geq 0.$$ 

\item for each $\widetilde{N} \subset N,$  each $\widetilde{e}=(R_{\widetilde{N}}, \omega_{\widetilde{N}}) \in \mathcal{E}^{\widetilde{N}},$  and $(\widetilde{q}^{|\widetilde{N}|-1},\widetilde{e})=g^{|\widetilde{N}|-1}(0,\widetilde{e}),$ $$ \left[\widetilde{q}_i^{{|\widetilde{N}|-1}}-q_i^{{|N|-1}}\right]\left[\widetilde{q}_j^{{|\widetilde{N}|-1}}-q_j^{|N|-1}\right]\geq 0 \text{ \ whenever \ }\begin{cases}
z(e)\geq 0 \text{ \ and \ } z(\widetilde{e}) \geq 0\\
z(e)\leq 0 \text{ \ and \ } z(\widetilde{e}) \leq 0\\
\end{cases}$$ 
for each $\{i,j\} \subset \widetilde{N}.$
\end{enumerate}

\end{definition}

Let us put in words the above definition for the  case of excess demand (this is, when $z(e)\geq 0$). The first two conditions relate to the behavior of the net trades of an economy throughout the iterations  of  $g.$ Condition (i) says that if at  stage $t-1$ agent $i$'s peak is not higher  than her endowment  \emph{plus} her net trade, i.e. $p(R_i) \leq \omega_i+q_i^{t-1},$  then agent $i$'s net trade is set at $p(R_i)-\omega_i$ from stage $t$ onward. Condition (ii) establishes that if at stage $t-1$ agent $i$'s peak is  higher  than her endowment  \emph{plus} her net trade, i.e. $p(R_i) > \omega_i+q_i^{t-1},$  then her net trade should not decrease from stage $t-1$ to stage $t,$ i.e. $q_i^{t}\geq q_i^{t-1}.$ The last two conditions relate to the behavior of the iterations of $g$ between two different economies. Condition (iii) states that, in another economy $\widetilde{e}$ with the same agents and preferences where no agent has lower endowment and the increase in the resources is not disruptive, i.e. $z(\widetilde{e}) \geq 0,$  the resources available to each agent in the last stage of the iterations cannot be smaller than the resources available to each  agent  in the last stage of the iterations in the original economy.   Finally, Condition (iv) says that for any subeconomy with excess demand, if the net trade of one agent in the last stage of the iteration is not  smaller (bigger) than the net trade that same  agent gets in the original economy, then the net trade of each of the other agents in the subeconomy  should not be smaller (bigger) than the net trade that  agent gets in the original economy either.

\begin{remark}\label{remseq}\emph{
Note that, as there are $|N|$ agents in the economy, at most $|N| -1$ adjustments take place. Therefore,  $(q^{|N|+t-1},e)=g^{|N|+t-1}(0,e)$ implies $q^{|N|+t-1}=q^{|N|-1}$ for each $t \geq 1.$ }
\end{remark}

Each iterative reallocator induces a reallocation rule in a straightforward way:

\begin{definition}
A reallocation rule $\varphi: \mathcal{E}\to X$ is \textbf{iterative} if there is an iterative reallocator $g:\mathcal{Q} \to \mathcal{Q}$ such that, for each $N \in \mathcal{N}$ and each $e \in \mathcal{E}^N,$  $(q^{|N|-1}, e)=g^{|N|-1}(0,e)$  implies  $\Delta \varphi(e)=q^{|N|-1}.$ 
\end{definition}

A prominent member of the class of iterative reallocation rules is the uniform reallocation rule, first studied by \cite{thomson1995axiomatic} \citep[see also][]{Klaus97, Klaus98}, that adapts the celebrated uniform rule characterized by \cite{Sprumont91} to the model with individual endowments:

\vspace{10 pt}


\noindent \textbf{Uniform reallocation rule, $\boldsymbol{u}$:} for each $N \in \mathcal{N},$ each $e\in \mathcal{E}^N,$ and each $i \in N,$
$$u_i(e)=\left\{\begin{array}{l l }
\min\{p(R_i), \omega_i+\lambda(e)\} & \text{if } z(e)\geq 0\\
\max\{p(R_i), \omega_i-\lambda(e)\} & \text{if } z(e)< 0\\
\end{array}\right.$$
where $\lambda(e)\geq 0$ and solves $\sum_{j \in N}u_j(e)=\sum_{j\in N}\omega_j.$ 

\vspace{10 pt}

\noindent Within the class of iterative reallocation rules, the uniform reallocation rule is the only one that supports \emph{envy-free} redistributions, meaning by this that for no $N \in \mathcal{N},$  $e\in \mathcal{E}^N,$ and pair of agents $\{i,j\} \subset N$ such that $\omega_i - \Delta u_j(e) \in \mathbb{R}_+,$ we have $(\omega_i - \Delta u_j(e) )P_i u_i(e)$ \citep[see][Theorem 1]{Moreno02}.

To see that the uniform reallocation rule is an iterative reallocation rule, given $e=(R,\omega)\in \mathcal{E}^N$ and $q^0=(0,\ldots, 0),$ 
 consider the iterative reallocator $g:\mathcal{Q} \to \mathcal{Q}$ defined as follows. If $(q^t,e)=g(q^{t-1},e)$ then, for each $i \in N$ and each $t=1, \ldots, |N|-1,$

$$q_i^t=\left\{\begin{array}{l l }
p(R_i)-\omega_i & \text{if  } i \in N^t\\
\lambda^{t} & \text{if  } i \in N\setminus N^t\\
\end{array}\right.$$
where 
 $$\lambda^t=\lambda^{t-1}+ \frac{\sum_{j \in N^t \setminus (\cup_{s=0}^{t-1}N^s)} [\omega_j+\lambda^{t-1}-p(R_j)]}{|N \setminus N^t|},$$
$\lambda^{0}=0$, $N^0=\emptyset$, and 
$$N^t=\left\{\begin{array}{l l }
\{j \in N : p(R_j) \leq \omega_j+q_j^{t-1}\} & \text{if } z(e)\geq 0\\
\{j \in N : p(R_j) > \omega_j +q_j^{t-1}\} & \text{if } z(e) < 0\\
\end{array}\right.$$

It is easy to see that $(q^{|N|-1},e)=g^{|N|-1}(0,e)$ implies $q^{|N|-1}=\Delta u(e)$ for each $N \in \mathcal{N}$ and each $e \in \mathcal{E}^N.$

\begin{example} \rm
Consider $e=(R,\omega)^{\{1,2,3,4\}}$ with $p(R_1)=0,$ $p(R_2)=2,$ $p(R_3)=3.5,$ and $p(R_4)=10$; and $\omega_1=9,$ $\omega_2=1,$ $\omega_3=0,$ and $\omega_4=2.$ Then, as $z(e)=15.5-12>0,$
\begin{description}
    \item \hspace{40 pt} $N^1=\{1\}, \ \lambda^1=0+\frac{9+0-0}{3}=3, \text{ \ and thus \  }q^1=(-9,3,3,3)$,
    \item \hspace{40 pt} $N^2=\{1,2\}, \ \ \lambda^2=3+\frac{1+3-2}{2}=4, \text{ \ and thus \  }q^2=(-9,1,4,4),$
    \item \hspace{40 pt} $N^3=\{1,2,3\},  \ \ \lambda^3=4+\frac{0+4-3.5}{1}=4.5, \text{ \ and thus \ }q^3=(-9,1,3.5,4.5)$.
\end{description} Therefore, $u_1(e)=0$,  
 $u_2(e)=2$, $u_3(e)=3.5$, and $u_4(e)=6.5.$  \hfill $\Diamond$
\end{example}

Another iterative reallocation rule, that will be analyzed in Section \ref{splitting}, is the proportional reallocation rule that we present next. 

\vspace{10 pt}

\noindent \textbf{Proportional reallocation rule, $\boldsymbol{\varphi^p}$:} for each $N \in \mathcal{N},$ each $e\in \mathcal{E}^N,$ and each $i \in N,$

$$\varphi^p_i(e)=\left\{\begin{array}{l l }
\min\{p(R_i), \lambda(e)\omega_i\} & \text{if } z(e)\geq 0\\
\max\{p(R_i), \lambda(e)\omega_i\} & \text{if } z(e)\leq 0\\
\end{array}\right.$$
where $\lambda(e)\geq 1$ and solves $\sum_{j \in N}\varphi_j^p (e)=\sum_{j\in N}\omega_j.$\footnote{For this rule to be well-defined, we need to constrain the domain of economies to those in which all individual endowments are always strictly positive. }


\subsection{Characterization}

The next result states that the class of iterative rules is characterized by the \emph{own-peak-only} property, the \emph{endowments lower bound}, \emph{one-sided endowments monotonicity} and \emph{one-sided population monotonicity:}

\begin{theorem}\label{sequential}
A reallocation rule satisfies the own-peak-only property, the endowments lower bound,  one-sided endowments monotonicity, and one-sided population monotonicity if and only if it is an iterative reallocation rule. 
\end{theorem}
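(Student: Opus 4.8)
The plan is to prove the two implications separately; in both directions the recurring move is to translate between the \emph{preference} comparisons appearing in the four axioms and the \emph{numerical} (net-trade) inequalities appearing in the clauses of Definition~\ref{reallocator}, the bridge being single-peakedness together with \emph{efficiency} (same-sidedness). I treat throughout only the excess-demand case $z(e)\geq 0$, the excess-supply case being symmetric.

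($\Leftarrow$) Suppose $\varphi$ is iterative, induced by an iterative reallocator $g$; fix $e=(R,\omega)\in\mathcal{E}^N$ with associated trajectory $q^0=0,q^1,q^2,\ldots$, so that $\Delta\varphi(e)=q^{|N|-1}$. The \emph{own-peak-only} property is immediate, since every clause of Definition~\ref{reallocator} that constrains the trajectory of $g$ refers to preferences only through the peaks $p(R_i)$ and through the sign of $z(e)$. For \emph{efficiency} I first rule out overshooting: if $\varphi_i(e)>p(R_i)$ for some $i$, then $p(R_i)\leq\omega_i+q_i^{|N|-1}$, so clause~(i) applied at stage $|N|$ forces $q_i^{|N|}=p(R_i)-\omega_i\neq q_i^{|N|-1}$, contradicting Remark~\ref{remseq}; hence $\varphi_i(e)\leq p(R_i)$ for every $i$. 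For the \emph{endowments lower bound}: if $p(R_i)\leq\omega_i$ then clause~(i), used from stage~$1$ on, gives $\varphi_i(e)=p(R_i)$, the $R_i$-maximal amount; if $p(R_i)>\omega_i$, an induction on $t$ using clauses~(i) and~(ii) gives $q_i^t\geq 0$ for all $t$, so $\omega_i\leq\varphi_i(e)\leq p(R_i)$ and single-peakedness yields $\varphi_i(e)\,R_i\,\omega_i$. For \emph{one-sided endowments monotonicity}, let $\omega'\geqslant\omega$ and $e'=(R,\omega')$; since then $z(e')\leq z(e)$, clause~(iii) applies in each of the two cases in which the axiom has content ($z(e')\geq 0$, and $z(e)\leq 0$) and yields $\varphi(e')\geqslant\varphi(e)$ componentwise, and because \emph{efficiency} places $\varphi_i(e)$ and $\varphi_i(e')$ on the same side of $p(R_i)$, single-peakedness upgrades this to the preference ranking the axiom requires. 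For \emph{one-sided population monotonicity}, let $N'\subseteq N$ and $e'=(R_{N'},\omega_{N'})$ with $z(e)\,z(e')\geq 0$; using that the last iterate of $g$ from $0$ in any economy equals the net trade $\Delta\varphi$ there, the bracketed factors in clause~(iv), applied with $N'$ as the subpopulation, become $\varphi_i(e')-\varphi_i(e)$, so clause~(iv) says these differences share a common sign over $i\in N'$, which --- \emph{efficiency} again putting the relevant allotments on one side of the peaks --- single-peakedness turns into ``$\varphi_i(e)\,R_i\,\varphi_i(e')$ for all $i\in N'$'' or ``$\varphi_i(e')\,R_i\,\varphi_i(e)$ for all $i\in N'$''.

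($\Rightarrow$) Conversely, let $\varphi$ satisfy the four axioms; by Lemma~\ref{lemaEMimpliesEFF} it is \emph{efficient}, so Lemma~\ref{elb} is available. I claim the ``lazy'' reallocator $g$ given by $g(q,e)=(\Delta\varphi(e),e)$ for every $(q,e)\in\mathcal{Q}$ is an iterative reallocator inducing $\varphi$: it satisfies $g(q,e)\in\mathcal{Q}(e)$ because $\varphi(e)\in X(e)$ forces $\Delta\varphi(e)\in Q(e)$; its trajectory from $0$ is $q^0=0$ and $q^t=\Delta\varphi(e)$ for all $t\geq 1$, so the induced rule is indeed $\varphi$; and clauses~(i)--(iv) hold. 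Clause~(i) at $t=1$ is precisely the conclusion of Lemma~\ref{elb}; at $t\geq 2$ its hypothesis reads $p(R_i)\leq\varphi_i(e)$, which with \emph{efficiency} forces $\varphi_i(e)=p(R_i)$, i.e.\ $q_i^t=p(R_i)-\omega_i$. Clause~(ii) is trivial for $t\geq 2$ (there $q^t=q^{t-1}$); for $t=1$, if $p(R_i)>\omega_i$ then the \emph{endowments lower bound}, \emph{efficiency} ($\varphi_i(e)\leq p(R_i)$) and single-peakedness give $\varphi_i(e)\geq\omega_i$, i.e.\ $q_i^1\geq 0$. Clause~(iii) asks $\widetilde\omega+\widetilde q^{|N|-1}\geqslant\omega+q^{|N|-1}$, i.e.\ $\varphi(\widetilde e)\geqslant\varphi(e)$ componentwise, which follows from \emph{one-sided endowments monotonicity} once --- exactly as in the other direction --- \emph{efficiency} and single-peakedness turn its preference conclusion into a numerical one. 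Clause~(iv) asks $[\varphi_i(\widetilde e)-\varphi_i(e)][\varphi_j(\widetilde e)-\varphi_j(e)]\geq 0$, i.e.\ that these differences share a sign over $\widetilde N$, which is what \emph{one-sided population monotonicity} delivers after the same conversion. The degenerate cases $|N|=1$ or $|\widetilde N|=1$ are trivial, since a one-agent economy forces the endowment allocation. (A genuinely step-by-step reallocator could be written instead, but it is not needed: an iterative \emph{rule} only has to be induced by \emph{some} iterative reallocator.)

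The step I expect to be the main obstacle is organizing the $(\Leftarrow)$ direction so that \emph{efficiency} is obtained \emph{directly} --- via the no-overshoot argument resting on Remark~\ref{remseq} and clause~(i) --- rather than through Lemma~\ref{lemaEMimpliesEFF}, which would be circular here since that lemma presupposes \emph{one-sided endowments monotonicity}, itself one of the axioms being established. The remainder is careful bookkeeping: performing, in every sign configuration and in both directions, the single-peakedness passages between ``$x\,R_i\,y$'' and ``$x\geq y$''/``$x\leq y$'', each time checking that the allotments involved lie on the side of $p(R_i)$ dictated by \emph{efficiency}, and verifying that the sign configurations not covered by the monotonicity axioms are genuinely vacuous (for instance, that with $\widetilde\omega\geqslant\omega$ one is never required to produce $\varphi(\widetilde e)\geqslant\varphi(e)$ when simultaneously $z(e)>0$ and $z(\widetilde e)<0$).
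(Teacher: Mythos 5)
Your proof is correct, and the $(\Leftarrow)$ direction coincides with the paper's: \emph{efficiency} is obtained directly from clause~(i) together with Remark~\ref{remseq} (exactly the no-overshoot argument you flag as the key organizational point, and which is indeed needed to avoid circularity with Lemma~\ref{lemaEMimpliesEFF}), and the two monotonicity axioms are read off clauses~(iii) and~(iv) after the efficiency/single-peakedness conversion; your treatment of the \emph{endowments lower bound} is in fact more explicit than the paper's one-line justification. The $(\Rightarrow)$ direction, however, takes a genuinely different route. The paper builds a non-degenerate step-by-step reallocator: it defines $q^t=\varphi(e^t)-\omega$ where $e^t=(R,\omega^t)$ is an auxiliary economy whose endowments are updated at each stage (satiated agents' endowments are reset to their peaks, the others' to their current holdings), and then verifies via three claims and Lemma~\ref{elb} that this trajectory stabilizes at $\Delta\varphi(e)$ and satisfies (i)--(iv). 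You instead observe that the ``lazy'' reallocator $g(q,e)=(\Delta\varphi(e),e)$, which jumps to the final net trade in one step, already satisfies all four clauses --- clause~(i) at $t=1$ being Lemma~\ref{elb}, clause~(ii) at $t=1$ being the numerical form of the endowments lower bound, and clauses~(iii)--(iv) being the numerical forms of the two monotonicities --- and that the definition of an iterative \emph{rule} only requires existence of \emph{some} inducing reallocator. This is logically sufficient and shorter; what the paper's construction buys is that it exhibits the rule as the outcome of a genuinely gradual adjustment process (which is the intended interpretation of the class and parallels the sequential rules of Barber\`a, Jackson and Neme), whereas your construction shows that, as formally stated, Definition~\ref{reallocator} does not force any gradualness. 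Both arguments share the same slight looseness as the paper on the \emph{own-peak-only} step in the $(\Leftarrow)$ direction (the clauses constrain $g$ only through peaks but do not by themselves force $g$ to \emph{depend} only on peaks), so you are not worse off than the original there.
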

\begin{proof} ($\Longrightarrow$) Let $\varphi$ be an \emph{own-peak-only}, \emph{one-sided endowments monotonic}, and \emph{one-sided population monotonic} reallocation rule that meets the  \emph{endowments lower bound}. 
 By Lemma \ref{lemaEMimpliesEFF}, $\varphi$ is also \emph{efficient}. The iterative reallocator  $g:\mathcal{Q} \to \mathcal{Q}$ is constructed as follows. Given $(q^{t-1},e)\in \mathcal{Q},$ define $q^t$ such that $(q^t, e)=g(q^{t-1},e)$ as 
\begin{equation}\label{master}
    q^t=\varphi(e^t)-\omega
\end{equation}
where  economy $e^t=(R, \omega^t) \in \mathcal{E}^N$ is such that, for each $i \in N,$  
$$\omega_i^t=\left\{
\begin{array}{l l}
p(R_i) & \text{if }z(e)[p(R_i)-\omega_i^{t-1}-q_i^{t-1}] \leq 0\\
\omega_i^{t-1}+q_i^{t-1} & \text{otherwise,}
\end{array}\right.$$ with $\omega_i^0=\omega_i$ and $q_i^0=0.$

Let us assume that $e=(R, \omega)\in \mathcal{E}^N$ is such that $z(e)\geq 0.$ The other case is similar. We need to see that $\Delta \varphi(e)=q^{|N|-1}$ where $q^{|N|-1}$ is such that $(q^{|N|-1},e)=g^{|N|-1}(0,e)$. 
In order to do this, 
for each  $t=1, \ldots, |N|-1,$ let $q^t$ be such that $(q^t,e)=g(q^{t-1},e)=g^t(0,e)$   (notice that $q^0=0$). 

\noindent \textbf{Claim 1: Let  $\boldsymbol{t \in \{1, \ldots, |N|-1\}}.$ If $\boldsymbol{p(R_i)\geq \omega_i^{t-1}+q_i^{t-1}}$ for each $\boldsymbol{i \in N,}$} \textbf{then} $\boldsymbol{q^t=\Delta \varphi(e)}$. Consider first the case $t=1.$ As $q^0=0,$ by the hypothesis $p(R_i) \geq \omega_i$ for each $i \in N.$ The \emph{endowments lower bound, efficiency,} and feasibility imply $\varphi(e)=\omega,$ and therefore $\Delta \varphi(e)=0.$  Note that, since in $e^1$ no agent's peak is less than her endowment and $z(e^1)\geq 0$, by the same reasoning as before $ \varphi(e^1)=\omega.$ Thus, $q^1=\omega-\omega=0.$ Next, assume the claim is true for each $t< T.$   Then $q_i^{T-1}=0$ for each $i \in N$ and, again, since  in $e^T$ no agent's peak is less than her endowment and $z(e^T)\geq 0,$ we get $q_i^T=0=\Delta \varphi(e).$ This proves the claim. 

\noindent \textbf{Claim 2: Let $\boldsymbol{t \in \{1, \ldots, |N|-1\}}.$ If $\boldsymbol{i \in N}$ is such that $\boldsymbol{p(R_i) \leq  \omega_i^{t-1}+q_i^{t-1}},$  then $\boldsymbol{q^t_i=\Delta \varphi_i(e)}$}. Let $i \in N$ be such  that $p(R_i) \leq  \omega_i^{t-1}+q_i^{t-1}.$ First, notice that when $t=1$,  $p(R_i) \leq \omega_i^0+q_i^{0}$ implies, as $q^0_i=0$ and $\omega_i^0=\omega_i$, that   $p(R_i) \leq \omega_i.$ Then,  by Lemma \ref{elb}, $\varphi_i(e)=p(R_i)$ and, therefore,
\begin{equation}\label{equ1}
    \Delta \varphi_i(e)=p(R_i)-\omega_i.
\end{equation}
Next, let $t \in \{1, \ldots, |N|\}$. Since $\omega_i^t=p(R_i)$ and  $z(e^t)\geq 0,$ by Lemma \ref{elb} applied to economy $e^t$, we have $\varphi_i(e^t)=p(R_i)$ and then $q^t_i=p(R_i)-\omega_i.$ Hence, by \eqref{equ1},  $q^t_i=\Delta \varphi_i(e).$ This proves the claim. 

Claims 1 and 2 show that if $q^{|N|-1}$ is such that  $(q^{|N|-1},e)=g^{|N|-1}(0,e)$, then $q^{|N|-1}=\Delta \varphi(e).$ It remains to be checked that function $g$ satisfies conditions (i)-(iv) in Definition \ref{reallocator}. Condition (i) is clear by Claims 1 and 2. Condition (ii) follows from the next claim.

\noindent \textbf{Claim 3: for each $\boldsymbol{t=1, \ldots, |N|-1,}$ if $\boldsymbol{i \in N}$ is such that $\boldsymbol{p(R_i) > \omega_i^{t-1}+q_i^{t-1}},$  then $\boldsymbol{q^t_i\geq q_i^{t-1}}$}. Let $i \in N$ be such that $p(R_i) > \omega_i^{t-1}+q_i^{t-1}$. Consider first the case $t=1.$ Since $q^0=0$ and $\omega_i^0=\omega_i,$ by the hypothesis $p(R_i)>\omega_i.$ Then, $\omega_i^1=\omega_i.$ This implies, as $z(e^1)\geq 0,$ that $\varphi_i(e^1)\geq \omega_i^1$ by the \emph{endowments lower bound.} Hence, $q_i^1=\varphi_i(e^1)-\omega_i\geq 0=q^0_i$ and thus $q_i^1 \geq q^0_i$. Next, assume the claim is true for each $t< T.$ Then $q_i^{T-1}\geq q_i^{T-2}\geq \ldots \geq q^0_i=0.$ Since $p(R_i)>\omega_i^{T-1}+q_i^{T-1}$ implies $\omega_i^T=\omega_i^{T-1}+q_i^{T-1}$ and $z(e^T)\geq 0,$ by the \emph{endowments lower bound} $\varphi_i(e^T)\geq \omega_i^T=\omega_i+\sum_{k=1}^{T-1}q_i^k.$ Then, $$q_i^T=\varphi_i(e^T)-\omega_i\geq \sum_{k=1}^{T-1}q_i^k\geq q_i^{T-1}.$$ This proves the claim.

Condition (iii) follows from the definition of $g$ and \emph{one-sided endowments monotonicity} of $\varphi,$ whereas condition (iv) is a consequence of the definition of $g$ and \emph{one-sided population monotonicity} of $\varphi.$

\noindent ($\Longleftarrow$) Let $\varphi$ be an iterative reallocation rule. Then there exists an iterative reallocator $g$ such that, for each $e=(R,\omega) \in \mathcal{E}^N,$ if $q^{|N|-1}$ is such that $(q^{|N|-1},e)=g^{|N|-1}(0,e)$, then $\Delta \varphi(e)=q^{|N|-1}.$ We will consider only the case $z(e)\geq 0,$ since an analogous argument can be used in the case $z(e)<0$. Next, we prove that $\varphi$ is \emph{efficient},\footnote{This is used to prove the two monotonicity properties.} \emph{one-sided endowments monotonic}, and \emph{one-sided population monotonic}.
\vspace{5 pt}
\\
\textbf{Efficiency}: We need to show that $\varphi_i(e)\leq p(R_i)$ for each $i \in N.$ Suppose $\varphi_i(e)\neq p(R_i).$ Then $\omega_i+q_i^{|N|-1}\neq p(R_i)$. If $q_i^{|N|-1} >p(R_i)-\omega_i,$ by  (i) in Definition \ref{reallocator} we have $q_i^{|N|}=p(R_i)-\omega_i.$ Then, $q_i^{|N|-1}\neq q_i^{|N|},$ contradicting Remark \ref{remseq}. Thus, $q_i^{|N|-1} \leq p(R_i)-\omega_i$ implying $\Delta \varphi_i(e)=q_i^{|N|-1} \leq p(R_i)-\omega_i$ and $\varphi_i(e) \leq p(R_i).$ 

 \vspace{5 pt}

\noindent \textbf{One-sided endowments monotonicity}: Let $\widetilde{e}=(R, \widetilde{\omega}) \in \mathcal{E}^N$ be such that $\widetilde{\omega}\geqslant \omega$ and $z(\widetilde{e})\geq 0,$ let $\widetilde{q}^{|N|-1}$ be such that  $(\widetilde{q}^{|N|-1},\widetilde{e})=g^{|N|-1}(0,\widetilde{e})$ and consider $i \in N$ such that $p(R_i) > \widetilde{\omega}_i.$ By condition (iii) in Definition \ref{reallocator}, $\widetilde{\omega}_i+\widetilde{q}_i^{|N|-1}\geq \omega_i+q_i^{|N|-1}.$ Then, $\varphi_i(\widetilde{e})\geq \varphi_i(e)$ and, as by \emph{efficiency} $p(R_i)\geq \varphi_i(\widetilde{e}),$ we have $\varphi_i(\widetilde{e})R_i\varphi_i(e).$ 
\vspace{5 pt}
\\
\textbf{One-sided population monotonicity}: Let $\widetilde{N} \subset N$ and   $\widetilde{e}=(R_{\widetilde{N}}, \omega_{\widetilde{N}}) \in \mathcal{E}^{\widetilde{N}}$ be such that $z(\widetilde{e})\geq 0.$  Take $\{i,j\} \subseteq \widetilde{N}.$ By condition (iv) in Definition \ref{remseq},  $[\widetilde{q}_i^{{|\widetilde{N}|}}-q_i^{{|N|}}][\widetilde{q}_j^{{|\widetilde{N}|}}-q_j^{|N|}]\geq 0.$  Assume, without loss of generality, that $\widetilde{q}_i^{{|\widetilde{N}|}}\geq q_i^{{|N|}}.$ Then, $\widetilde{q}_j^{{|\widetilde{N}|}}\geq q_j^{{|N|}}.$ This implies $\varphi_i(\widetilde{e})\geq \varphi_i(e)$ and $\varphi_j(\widetilde{e})\geq \varphi_j(e).$ As $z(\widetilde{e})\geq 0,$ by \emph{efficiency,} $p(R_i)\geq \varphi_i(\widetilde{e})$ and $p(R_j)\geq \varphi_j(\widetilde{e}).$ Thus, $\varphi_i(\widetilde{e})R_i\varphi_i(e)$ and $\varphi_j(\widetilde{e})R_i\varphi_j(e).$
\vspace{5 pt}
\\
To complete the proof, notice that $\varphi$ satisfies the  \emph{own-peak-only} property because $g$ does, and meets the \emph{endowments lower bound} because, for each agent, the adjustment process at each step guarantees an amount at least as good as the individual endowment. 
\end{proof}

\vspace{5 pt}

The independence of the axioms involved in the characterization of Theorem  \ref{sequential} is analyzed in Appendix \ref{appendix}. 

\section{Variable population manipulations}\label{sectionvariable}

In this section, we analyze each of the four properties of immunity to manipulation presented in the introduction and its relations with the family of iterative reallocation rules.

\subsection{Withdrawal-proofness}\label{sectionwithdrawal}

Consider an economy and suppose that an agent withdraws with her endowment and the reallocation  rule is applied without her. It could be the case that the amount that some other agent received   in the reallocation together with the endowment of the agent that withdrew could be re-divided between the two  of them  in such a way that both agents get (strictly) better off with respect to the assignments they would have obtained if the first agent had not withdrawn. We require immunity to this sort of behavior:

\vspace{5 pt}
\noindent
\textbf{Withdrawal-proofness:} For each $e=(R,\omega) \in
\mathcal{E}^N,$ each $\{i, j\} \subset N$ and each 
$(x_i,x_j) \in \mathbb{R}^2_+$ such that
$x_i+ x_j=\varphi_i(e')+\omega_j,$ where $e' = (R_{N\setminus\{j\}}, \omega_{N\setminus\{j\}}),$ it is not the case that $x_k P_k \varphi_k(e)$ for each
$k \in \{i,j\}.$
\vspace{5 pt}

 The next result shows that \emph{withdrawal-proofness} is implied by some of the properties discussed in Subsection \ref{basic properties}. 

\begin{lemma}\label{Th wp}
Each efficient, own-peak-only, and one-sided population monotonic  reallocation rule that meets the endowments lower bound is withdrawal-proof.  
\end{lemma}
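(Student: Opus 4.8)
I want to show that if $\varphi$ is efficient, own-peak-only, one-sided population monotonic, and meets the endowments lower bound, then no pair $\{i,j\}$ can jointly benefit from the withdrawal manipulation. Fix $e = (R,\omega) \in \mathcal{E}^N$, a pair $\{i,j\} \subset N$, and $(x_i, x_j)$ with $x_i + x_j = \varphi_i(e') + \omega_j$, where $e' = (R_{N \setminus \{j\}}, \omega_{N \setminus \{j\}})$. Suppose toward a contradiction that $x_i P_i \varphi_i(e)$ and $x_j P_j \varphi_j(e)$. The natural first move is to relate $z(e)$ and $z(e')$. Note $z(e) - z(e') = p(R_j) - \omega_j$. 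The argument will split on the sign of $p(R_j) - \omega_j$ and on the sign of $z(e)$; I expect that in the "aligned" cases one can apply one-sided population monotonicity directly to $e$ and $e'$ (which differ only by removing agent $j$), and in the "misaligned" cases Lemma \ref{elb} pins down $\varphi_j(e)$ or related quantities exactly, forcing a contradiction with the endowments lower bound.

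**Key steps.** First I would dispose of the case where removing $j$ does not flip the sign of the excess demand, i.e. $z(e)\,z(e') \geq 0$. Then one-sided population monotonicity applies to $N' = N\setminus\{j\}$: either $\varphi_k(e) R_k \varphi_k(e')$ for all $k \in N\setminus\{j\}$ or the reverse, for all such $k$; in particular this holds for agent $i$. Combine this with efficiency on both $e$ and $e'$ (which, via Lemma \ref{lemaEMimpliesEFF}-style reasoning, forces both $\varphi_i(e)$ and $\varphi_i(e')$ to lie on the same side of $p(R_i)$) to conclude that $\varphi_i(e')$ is between $\varphi_i(e)$ and $p(R_i)$ in the relevant direction — hence $\varphi_i(e') R_i \varphi_i(e)$ or the two straddle in a controlled way. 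The point is to get an upper (resp. lower) bound on $\varphi_i(e')$ in terms of $\varphi_i(e)$ and $p(R_i)$. Then, since $x_i P_i \varphi_i(e)$ with single-peakedness, $x_i$ must lie strictly between $\varphi_i(e)$ and $p(R_i)$, so $x_i$ and $\varphi_i(e')$ are comparable, and one derives $x_j = \varphi_i(e') + \omega_j - x_i$. Using the endowments lower bound at $e'$, $\varphi_j$'s target at $e$, and efficiency, I would show $x_j$ cannot be strictly preferred to $\varphi_j(e)$ — typically because $x_j$ ends up on the wrong side of $p(R_j)$ relative to $\omega_j$, or $\leq \omega_j$ with $\omega_j$ already weakly worse than $\varphi_j(e)$.

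**Remaining cases and the obstacle.** When $z(e)\,z(e') < 0$, the removed agent $j$ flips the economy's status; here I would use that $p(R_j) - \omega_j$ has the sign of $z(e)$ (it must, to flip things when $z(e)$ is already the same sign), combined with Lemma \ref{elb} applied to $e$ for agent $j$: if, say, $z(e) \geq 0$ then one shows $\omega_j \leq p(R_j)$ is impossible in the flip case, so $p(R_j) < \omega_j$, whence $\varphi_j(e) = p(R_j)$ by Lemma \ref{elb}; but then $x_j P_j \varphi_j(e) = p(R_j)$ is impossible by single-peakedness. The main obstacle I anticipate is the bookkeeping in the aligned case: carefully extracting from one-sided population monotonicity a usable inequality between $\varphi_i(e')$ and $\varphi_i(e)$ (the property only gives a dichotomy over \emph{all} agents in $N'$, so one must argue the "wrong" branch of the dichotomy is excluded, presumably by a feasibility/summation argument comparing $\sum_{k \neq j}\varphi_k(e)$ with $\sum_{k\neq j}\varphi_k(e') = \sum_{k \neq j}\omega_k$ and using that $\varphi_j(e)$ sits on a known side of $p(R_j)$), and then chaining that cleanly into a contradiction with $x_i P_i \varphi_i(e)$ and $x_j P_j \varphi_j(e)$ simultaneously via the budget identity $x_i + x_j = \varphi_i(e') + \omega_j$.
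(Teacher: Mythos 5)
Your overall architecture (argue by contradiction, use the budget identity $x_i+x_j=\varphi_i(e')+\omega_j$ together with efficiency and the endowments lower bound to force a summation/feasibility contradiction, and invoke \emph{one-sided population monotonicity} on $N'=N\setminus\{j\}$) is essentially the paper's argument, and your ``aligned'' case can be completed along the lines you sketch: Lemma \ref{elb} gives $\omega_j< p(R_j)$ (since $x_jP_j\varphi_j(e)$ rules out $\varphi_j(e)=p(R_j)$), the endowments lower bound then gives $\varphi_j(e)\geq\omega_j$, hence $\varphi_i(e')-\varphi_i(e)>\varphi_j(e)-\omega_j\geq 0$, and feasibility produces some $k^\star\in N\setminus\{i,j\}$ with $\varphi_{k^\star}(e')<\varphi_{k^\star}(e)\leq p(R_{k^\star})$; agents $i$ and $k^\star$ then move strictly in opposite welfare directions, which violates both branches of the \emph{one-sided population monotonicity} dichotomy at once.

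The genuine gap is your treatment of the ``misaligned'' case $z(e)\,z(e')<0$. With $z(e)\geq 0>z(e')$ you have $p(R_j)-\omega_j=z(e)-z(e')>0$, i.e.\ the flip \emph{requires} $\omega_j<p(R_j)$ — exactly the opposite of your claim that ``$\omega_j\leq p(R_j)$ is impossible in the flip case, so $p(R_j)<\omega_j$.'' Moreover, $\omega_j<p(R_j)$ is also what the manipulation hypothesis itself forces via Lemma \ref{elb}, so the two are perfectly consistent and no contradiction can be extracted from agent $j$ alone; in particular Lemma \ref{elb} does \emph{not} apply to $j$ here and you cannot conclude $\varphi_j(e)=p(R_j)$. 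The correct way to dispose of this case — and what the paper in effect does, without ever splitting on the sign of $z(e')$ — is to observe that the summation argument above is sign-agnostic: it already yields a $k^\star\in N\setminus\{i,j\}$ with $\varphi_{k^\star}(e')<\varphi_{k^\star}(e)\leq p(R_{k^\star})$, and \emph{efficiency} at $e'$ then forces $z(e')\geq 0$ (otherwise $\varphi_{k^\star}(e')\geq p(R_{k^\star})$), so the flip case is vacuous. As written, your proof would not close in that case.
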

\begin{proof}
Let $\varphi$ satisfy the hypothesis of the Theorem. By Lemma \ref{lemaEMimpliesEFF}, $\varphi$ is also \emph{efficient}. Assume $\varphi$ is not \emph{withdrawal-proof.} Then, there are $e=(R,\omega) \in \mathcal{E}^N,$ $\{i,j\} \subset N,$ and $(x_i,x_j) \in \mathbb{R}^2_+$ such that, if  $e'=(R_{N\setminus \{j\}}, \omega_{N \setminus \{j\}}),$ then 
\begin{equation}\label{realloc}
x_i+x_j=\varphi_i(e')+\omega_j,
\end{equation}
and 
\begin{equation}\label{pref}
x_kP_k\varphi_k(e) \text{ for each } k \in \{i,j\}.
\end{equation}
Assume $z(e)\geq 0.$ The case $z(e)\leq 0$ can be handled similarly.  By \eqref{pref}, $z(e)>0.$ By \emph{efficiency,} $\varphi_k(e) \leq p(R_k)$ for each $k \in N.$ By \eqref{pref}, $\varphi_k(e) < x_k$ for each $k \in \{i,j\}$ and therefore, by \eqref{realloc}, 
\begin{equation}\label{desigualdad1}
\varphi_i(e)+\varphi_j(e)<x_i+x_j=\varphi_i(e')+\omega_j. 
\end{equation}  
\textbf{Claim: there is $\boldsymbol{k^\star \in N\setminus \{i,j\}}$ such that $\boldsymbol{\varphi_{k^\star}(e') < \varphi_{k^\star}(e)}$.} Otherwise,
\begin{equation}\label{desigualdad2}
\sum_{k \in N\setminus \{i,j\}}\varphi_k(e')\geq \sum_{k \in N\setminus \{i,j\}}\varphi_k(e)
\end{equation}
and, since $\sum_{k \in N\setminus \{j\}}\omega_k=\sum_{k \in N\setminus \{j\}}\varphi_k(e'),$ by \eqref{desigualdad1} and \eqref{desigualdad2} we have $$\sum_{k \in N} \omega_k=\sum_{k \in N\setminus \{j\}}\varphi_k(e')+\omega_j>\sum_{k \in N}\varphi_k(e)=\sum_{k \in N} \omega_k,$$ which is absurd. This proves the Claim. 
 
 Now, by the Claim and \emph{efficiency}, $\varphi_{k^\star}(e') < \varphi_{k^\star}(e)\leq p(R_{k^\star}).$ This implies 
\begin{equation}\label{welf1}
\varphi_{k^\star}(e)P_{k^\star}\varphi_{k^\star}(e')
\end{equation} and also, by \emph{efficiency,} $z(e') \geq 0.$ By \eqref{pref},  $\varphi_j(e)\neq p(R_j)$ holds.  Then, by Lemma \ref{elb}, $\omega_j \leq p(R_j);$ and by the \emph{endowments lower bound,} $\varphi_j(e) \geq \omega_j.$ It follows from this and  \eqref{desigualdad1} that $0 \leq \varphi_j(e)-\omega_j<\varphi_i(e')-\varphi_i(e).$ Therefore, $\varphi_i(e')>\varphi_i(e).$ As $z(e') \geq 0,$ by \emph{efficiency,} $\varphi_i(e') \leq p(R_i).$  Thus, $\varphi_i(e) < \varphi_i(e') \leq p(R_i)$ and 
\begin{equation}\label{welf2}
\varphi_{i}(e')P_{i}\varphi_{i}(e).
\end{equation}  
Note that, as $z(e)\geq 0$ and $z(e')\geq 0,$ \eqref{welf1} and \eqref{welf2} contradict \emph{one-sided population  monotonicity}. We conclude that $\varphi$ is \emph{withdrawal-proof. } 
\end{proof}

As a consequence of the previous result and Theorem \ref{sequential}, the whole class of \emph{iterative} reallocation rules precludes this kind of manipulation. 

\begin{corollary}\label{cor} Each iterative reallocation rule is withdrawal-proof.
\end{corollary}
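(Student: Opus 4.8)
The plan is to derive Corollary \ref{cor} directly from the two results already established: Lemma \ref{Th wp} and Theorem \ref{sequential}. Lemma \ref{Th wp} states that every reallocation rule which is \emph{efficient}, \emph{own-peak-only}, \emph{one-sided population monotonic}, and meets the \emph{endowments lower bound} is \emph{withdrawal-proof}. Theorem \ref{sequential} states that every iterative reallocation rule is \emph{own-peak-only}, meets the \emph{endowments lower bound}, and is both \emph{one-sided endowments monotonic} and \emph{one-sided population monotonic}. So the only gap to bridge is \emph{efficiency}, and this is supplied by Lemma \ref{lemaEMimpliesEFF}, which says that any \emph{own-peak-only}, \emph{one-sided endowments monotonic} rule meeting the \emph{endowments lower bound} is \emph{efficient}.

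Concretely, I would argue as follows. Let $\varphi$ be an iterative reallocation rule. By the ($\Longleftarrow$) direction of Theorem \ref{sequential}, $\varphi$ satisfies the \emph{own-peak-only} property, the \emph{endowments lower bound}, \emph{one-sided endowments monotonicity}, and \emph{one-sided population monotonicity}. The first three of these, by Lemma \ref{lemaEMimpliesEFF}, imply that $\varphi$ is \emph{efficient}. Hence $\varphi$ is \emph{efficient}, \emph{own-peak-only}, \emph{one-sided population monotonic}, and meets the \emph{endowments lower bound}, so all four hypotheses of Lemma \ref{Th wp} hold. Applying Lemma \ref{Th wp} yields that $\varphi$ is \emph{withdrawal-proof}, which is the desired conclusion.

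There is no real obstacle here: the corollary is a pure bookkeeping consequence of three prior results, so the proof is essentially a one-line citation chain. The only thing to be careful about is to invoke each ingredient correctly — in particular, to note that \emph{efficiency} does not appear among the four axioms in the statement of Theorem \ref{sequential} and must be pulled in separately via Lemma \ref{lemaEMimpliesEFF} (which is also exactly how the author obtained \emph{efficiency} inside the proofs of Lemma \ref{Th wp} and the forward direction of Theorem \ref{sequential}). I would write the proof as a short paragraph: invoke Theorem \ref{sequential} to extract the four axioms, invoke Lemma \ref{lemaEMimpliesEFF} to add \emph{efficiency}, then invoke Lemma \ref{Th wp} to conclude \emph{withdrawal-proofness}.
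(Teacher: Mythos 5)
Your proof is correct and follows exactly the paper's route: the corollary is stated as an immediate consequence of Lemma \ref{Th wp} and Theorem \ref{sequential}, with \emph{efficiency} supplied by Lemma \ref{lemaEMimpliesEFF}. Your only addition is to make the efficiency step explicit, which the paper leaves implicit (and indeed re-derives redundantly inside the proof of Lemma \ref{Th wp}).
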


\subsection{Endowments-merging-proofness}\label{sectionmerging}

Another manipulation involving variable population is the following. Consider an economy and a pair of agents in that economy. One of those agents gives her endowment to the other and withdraws. The reallocation rule is applied  without the first agent and with the second agent's enlarged endowment. The allocation that the second agent obtains  could be divided between the two agents in such a way that each agent is at least as well off as she  would have been if the merging had not taken place, and at least one of them is better off. We require immunity to this sort of behavior: 

\vspace{5 pt}
\noindent
\textbf{Endowments-merging-proofness:} For each $e=(R,\omega) \in \mathcal{E}^N,$ each $\{i, j\} \subset N$ and each  $(x_i,x_j) \in \mathbb{R}^{2}_+$ such that $x_i+ x_j=\varphi_i(e'),$ where $e'= (R_{N\setminus\{j\}},\omega_i', \omega_{N\setminus\{i,j\}})$ and  $\omega_i'=\omega_i+\omega_j,$ it is not the case that $x_k R_k \varphi_k(e)$ for each $k \in \{i,j\},$ and  $x_kP_k\varphi_k(e)$ for at least one $k \in \{i,j\}.$
\vspace{5 pt}

Each rule in the class of iterative reallocation rules precludes such manipulations.

\begin{lemma}\label{Th emp}
Each iterative reallocation rule is  endowments-merging-proof.  
\end{lemma}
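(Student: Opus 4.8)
The plan is to prove the contrapositive: assume some iterative reallocation rule $\varphi$ is not \emph{endowments-merging-proof} and derive a contradiction with the structure of the underlying iterative reallocator $g$ (equivalently, with the four axioms of Theorem \ref{sequential}, since those characterize iterative rules). So suppose there are $e=(R,\omega)\in\mathcal{E}^N$, a pair $\{i,j\}\subset N$, and $(x_i,x_j)\in\mathbb{R}^2_+$ with $x_i+x_j=\varphi_i(e')$, where $e'=(R_{N\setminus\{j\}},\omega_i',\omega_{N\setminus\{i,j\}})$ and $\omega_i'=\omega_i+\omega_j$, such that $x_kR_k\varphi_k(e)$ for both $k\in\{i,j\}$ with strict preference for at least one. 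As in the proof of Lemma \ref{Th wp}, I would split into the cases $z(e)\geq 0$ and $z(e)\leq 0$ and treat only the former, the latter being symmetric.

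First I would record the easy consequences. By Lemma \ref{lemaEMimpliesEFF} the rule is \emph{efficient}, so $\varphi_k(e)\leq p(R_k)$ for all $k\in N$. The preference hypotheses together with single-peakedness and feasibility should force $z(e)>0$ (if $z(e)=0$ everyone gets their peak and no manipulation improves anyone). Then I would compare the economies $e$ and $e'$: they have the same agent set minus $j$, the same preferences on $N\setminus\{j\}$, and $e'$ differs from $e|_{N\setminus\{j\}}$ only in that $i$'s endowment has been raised from $\omega_i$ to $\omega_i+\omega_j$. Crucially $\sum_{k\in N\setminus\{j\}}\omega_k$ (with $i$'s enlarged endowment) equals $\sum_{k\in N}\omega_k$, i.e. the \emph{total resource in $e'$ equals that in $e$}. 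From $x_i+x_j=\varphi_i(e')$ and $x_k\,R_k\,\varphi_k(e)$, using that $\varphi_j(e)\le p(R_j)$ and (via Lemma \ref{elb}) the location of $\omega_j$ relative to $p(R_j)$, I expect to deduce $\varphi_i(e')\geq\varphi_i(e)+\varphi_j(e)-\omega_j$ and then, by a counting argument over $N\setminus\{i,j\}$ identical to the Claim in Lemma \ref{Th wp}, that there is some $k^\star\in N\setminus\{i,j\}$ with $\varphi_{k^\star}(e')<\varphi_{k^\star}(e)$, hence (by efficiency) $\varphi_{k^\star}(e)\,P_{k^\star}\,\varphi_{k^\star}(e')$ and $z(e')\geq 0$.

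The main work — and the main obstacle — is that \emph{one-sided population monotonicity} does not directly apply here, because $e'$ is not a subeconomy of $e$: the endowment of $i$ has changed. So I cannot simply quote Lemma \ref{Th wp}. The fix is to interpolate through an auxiliary economy. Let $\hat{e}=(R_{N\setminus\{j\}},\omega_{N\setminus\{j\}})$ be the genuine subeconomy of $e$ obtained by removing $j$ (keeping $i$'s original endowment $\omega_i$). Then $e'$ is obtained from $\hat{e}$ by raising only $i$'s endowment, so $e'$ and $\hat{e}$ are related by \emph{one-sided endowments monotonicity}; and $\hat{e}$ and $e$ are related by \emph{one-sided population monotonicity}. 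I would check the sign conditions ($z(\hat{e})\geq 0$ should follow since $z(e)>0$ and removing $j$ can only help or, if it flips the sign, then $z(\hat e)<0$ and a separate short argument using efficiency and the endowments lower bound handles it — I expect $z(\hat e)\ge 0$ to be the relevant case, and to argue it from the displayed inequalities), then chain the two monotonicities: on $N\setminus\{j\}$, $\varphi(\hat{e})$ is welfare-comparable to $\varphi(e)$ in one uniform direction, and $\varphi(e')$ is welfare-comparable to $\varphi(\hat{e})$ in one uniform direction. Combined with the facts that $\varphi_i(e')P_i\varphi_i(e)$-type inequality we extracted and $\varphi_{k^\star}(e)P_{k^\star}\varphi_{k^\star}(e')$, the uniform directions are forced to be opposite for $i$ and $k^\star$ somewhere along the chain, which is the contradiction. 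Finishing the case $z(e)\le 0$ symmetrically completes the proof.

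I would close by noting that, because iterative rules are exactly the rules satisfying the four axioms (Theorem \ref{sequential}), it suffices to have used only those axioms plus efficiency and Lemma \ref{elb}; no direct manipulation of the reallocator $g$ is needed, which keeps the argument parallel to the proof of \emph{withdrawal-proofness}.
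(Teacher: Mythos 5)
Your overall architecture coincides with the paper's: argue by contradiction, extract $\varphi_i(e')>\varphi_i(e)+\varphi_j(e)$, deduce by resource accounting that some $k^\star\in N\setminus\{i,j\}$ has $\varphi_{k^\star}(e')<\varphi_{k^\star}(e)$ (hence $z(e')\geq 0$), and interpolate through the genuine subeconomy $\hat{e}=(R_{N\setminus\{j\}},\omega_{N\setminus\{j\}})$ (the paper's $e''$), relating $\hat{e}$ to $e'$ by \emph{one-sided endowments monotonicity} and $\hat{e}$ to $e$ by a population comparison. One small correction before the main point: the inequality you need is $\varphi_i(e')>\varphi_i(e)+\varphi_j(e)$, which follows directly from $x_i+x_j=\varphi_i(e')$ and $x_k\geq\varphi_k(e)$ (with one strict); your version with the extra $-\omega_j$ is too weak to produce $k^\star$, since the total resources of $e'$ and $e$ coincide exactly.

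The genuine gap is in your closing step. The two monotonicity links do \emph{not} force opposite uniform directions for $i$ and $k^\star$. From $\varphi_{k^\star}(\hat{e})\leq\varphi_{k^\star}(e')<\varphi_{k^\star}(e)$, \emph{one-sided population monotonicity} (with efficiency) pins the direction to $\varphi_k(e)\geq\varphi_k(\hat{e})$ for all $k\in N\setminus\{j\}$, while \emph{one-sided endowments monotonicity} gives $\varphi_k(e')\geq\varphi_k(\hat{e})$ for all such $k$; agent $i$ then satisfies $\varphi_i(\hat{e})\leq\varphi_i(e)<\varphi_i(e')$, which is consistent with both links, so no contradiction appears. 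Worse, summing the population-monotonicity inequalities over $N\setminus\{j\}$ yields $\varphi_i(e)+\varphi_j(e)\leq\varphi_i(\hat{e})+\omega_j$, the \emph{reverse} of the inequality the argument actually needs. The paper closes instead by invoking \emph{withdrawal-proofness} (Corollary \ref{cor}, already established from population monotonicity applied in the opposite orientation) on the pair $(e,\hat{e})$ to obtain $\varphi_i(e)+\varphi_j(e)\geq\varphi_i(\hat{e})+\omega_j$; combined with $\varphi_i(e')>\varphi_i(e)+\varphi_j(e)$ this gives $\varphi_i(e')>\varphi_i(\hat{e})+\omega_j$, and since the total resources of $e'$ exceed those of $\hat{e}$ by exactly $\omega_j$, some agent in $N\setminus\{i,j\}$ must receive strictly less at $e'$ than at $\hat{e}$, contradicting the endowments-monotonicity inequality. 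This quantitative accounting step, with withdrawal-proofness as input, is what your qualitative ``directions must clash'' argument is missing.
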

\begin{proof}
Let $\varphi$ be an iterative reallocation rule. By Theorem  \ref{sequential}, $\varphi$ is \emph{one-sided endowments monotonic}. By Lemma \ref{lemaEMimpliesEFF}, $\varphi$ is also \emph{efficient}.  Assume $\varphi$ is not \emph{endowments-merging-proof.} Then, there are $e=(R,\omega) \in \mathcal{E}^N,$ $\{i,j\} \subset N,$ and $(x_i,x_j) \in \mathbb{R}^2_+$ such that, if  $e'=(R_{N\setminus \{j\}}, \omega_i', \omega_{N \setminus \{i,j\}}),$ then 
\begin{equation}\label{realloc2}
x_i+x_j=\varphi_i(e'),
\end{equation}
\begin{equation}\label{pref21}
x_kR_k\varphi_k(e) \text{ for each } k \in \{i,j\},
\end{equation}
and 
\begin{equation}\label{pref2}
x_kP_k\varphi_k(e) \text{ for at least one } k \in \{i,j\}.
\end{equation}
Assume $z(e)\geq 0.$ By \eqref{pref2}, $z(e)>0.$ By \emph{efficiency,} $\varphi_k(e) \leq p(R_k)$ for each $k \in N.$ By \eqref{pref21} and \eqref{pref2}, $x_k \geq \varphi_k(e)$ for each $k \in \{i,j\}$ and $x_k > \varphi_k(e)$ for at least one $k \in \{i,j\}.$ Therefore, by \eqref{realloc2}, 
\begin{equation}\label{desigualdad3}
\varphi_i(e')=x_i+x_k>\varphi_i(e)+\varphi_j(e). 
\end{equation}  
\textbf{Claim 1: there is $\boldsymbol{k^\star \in N\setminus \{i,j\}}$ such that $\boldsymbol{\varphi_{k^\star}(e') < \varphi_{k^\star}(e)}$.} Otherwise,
\begin{equation}\label{desigualdad4}
\sum_{k \in N\setminus \{i,j\}}\varphi_k(e')\geq \sum_{k \in N\setminus \{i,j\}}\varphi_k(e)
\end{equation}
and, since $\sum_{k \in N\setminus \{j\}}\omega_k=\sum_{k \in N\setminus \{j\}}\varphi_k(e'),$ by \eqref{desigualdad3} and \eqref{desigualdad4} we have $$\sum_{k \in N} \omega_k=\sum_{k \in N\setminus \{j\}}\varphi_k(e')>\sum_{k \in N}\varphi_k(e)=\sum_{k \in N} \omega_k,$$ which is absurd. 
 
 Now, by Claim 1 and \emph{efficiency}, $\varphi_{k^\star}(e') < \varphi_{k^\star}(e)\leq p(R_{k^\star}),$ which implies  $z(e') \geq 0.$ Let $e''=(R_{N\setminus\{j\}}, \omega_{N\setminus\{j\}}).$ As $z(e') \geq 0,$ it follows that $z(e'')\geq 0.$ By Corollary \ref{cor}, $\varphi$ is \emph{withdrawal-proof,} which implies that 
\begin{equation}\label{wp1}
\varphi_i(e)+\varphi_j(e)\geq \varphi_i(e'')+\omega_j.
\end{equation}
Since $(\omega_i', \omega_{N\setminus\{i,j\}}) \geqslant \omega_{N\setminus\{j\}},$ by \emph{one-sided endowments monotonicity,} $\varphi_k(e')R_k\varphi_k(e'')$ for each $k \in N\setminus\{j\}.$ Then \emph{efficiency} implies  
\begin{equation}\label{emon}
\varphi_k(e')\geq \varphi_k(e'') \text{ for each } k \in N\setminus \{j\}. 
\end{equation}
Combining \eqref{desigualdad3} and \eqref{wp1} we obtain 
\begin{equation}\label{desigualdad5}
\varphi_i(e')>\varphi_i(e'')+\omega_j.
\end{equation}
\textbf{Claim 2: there is $\boldsymbol{k^{\star\star} \in N\setminus \{i,j\}}$ such that $\boldsymbol{\varphi_{k^{\star\star}}(e'') > \varphi_{k^{\star\star}}(e')}$.} Otherwise,
\begin{equation}\label{desigualdad6}
\sum_{k \in N\setminus \{i,j\}}\varphi_k(e'')\leq \sum_{k \in N\setminus \{i,j\}}\varphi_k(e')
\end{equation}
and, since $\sum_{k \in N\setminus \{j\}}\omega_k=\sum_{k \in N\setminus \{j\}}\varphi_k(e''),$ by \eqref{desigualdad5} and \eqref{desigualdad6} we have $$\sum_{k \in N} \omega_k=\omega_j+\sum_{k \in N\setminus \{j\}}\varphi_k(e'')<\sum_{k \in N\setminus\{j\}}\varphi_k(e')=\sum_{k \in N} \omega_k,$$ which is absurd.  

Therefore, by Claim 2, there is $k^{\star \star} \in N\setminus \{i,j\}$ such that $\varphi_{k^{\star \star}}(e'')>\varphi_{k^{\star \star}}(e').$ This contradicts \eqref{emon}. We conclude that $\varphi$ is \emph{endowments-merging-proof.} 
\end{proof}

\subsection{Endowments-splitting-proofness}\label{splitting}

Consider an economy and assume that an agent in the economy transfers some of her endowment to another agent that was not initially present; the rule is applied, and the guest transfers her assignment to the agent who invited her in. The first agent could obtain an amount that she prefers to her initial assignment. We require immunity to this type of behavior:

\vspace{5 pt}
\noindent
\textbf{Endowments-splitting-proofness:}  
For each $e=(R,\omega) \in
\mathcal{E}^N,$ each $i \in  N,$ each $j \notin N,$  each  $R_j \in \mathcal{R},$ and each $(\omega_i', \omega_j') \in \mathbb{R}^{2}_+$ such that 
$\omega_i'+ \omega_j'=\omega_i,$ we have $\varphi_i(e) R_i [\varphi_i(e')+\varphi_j(e')],$ with $e' = (R, R_{j},\omega_i', \omega_{N\setminus\{i\}}, \omega_j').$
\vspace{5 pt}

Not all iterative reallocation rules satisfy this property. The following example shows that 
the uniform reallocation rule violates \emph{endowments-splitting-proofness.} 

\begin{example}
\emph{ Let $e=(R,\omega) \in \mathcal{E}^{\{1,2,3\}}$ be such that $p(R_1)=4,$ $p(R_2)=0,$ $p(R_3)=\omega_1=\omega_2=2$ and $\omega_3=1.$ Then,  $u_1(e)=3,$ $u_2(e)=0,$ and $u_3(e)=2.$  Next, let $R_4 \in \mathcal{R}$ be such that $p(R_4)=4$ and  let $\omega_4'=1.$ Consider the economy $e'=(R,R_4, \omega_1', \omega_{\{2,3\}}, \omega_4') \in \mathcal{E}^{\{1,2,3,4\}}$ with $\omega'_1=1$ (notice that $\omega_1=\omega_4'+\omega_1'$). It follows that $u_1(e')=u_3(e')=u_4(e')=\frac{5}{3},$ $u_2(e')=0,$ and   $u_1(e')+u_4(e')=\frac{10}{3} \ P_1 \  3=u_1(e).$  This implies that $u$ is not  \emph{endowments-splitting-proof.}
}  \hfill $\Diamond$
\end{example}

\noindent Priority reallocation rules\footnote{Given a linear $\prec$ order over the set of potential agents $\mathbb{N}$, the \emph{priority reallocation rule} $\varphi^\prec$ for economies with excess demand (supply) satiates all suppliers (demanders) and demanders (suppliers) according to order $\prec$, respecting the endowments lower bound.  For economies with excess supply, a symmetric procedure is performed. See Appendix \ref{appendix} for a formal definition. It is easy to see that such reallocation rules are iterative in our sense.} violate the property as well. 

\begin{example}
\emph{Consider $\prec$ as the usual ``less than'' order in $\mathbb{N}$. Let $e=(R, \omega) \in \mathcal{E}^{\{1,3,4\}}$ be such that $p(R_1)=0,$ $\omega_1=4,$ $p(R_3)=p(R_4)=6,$ and  $\omega_3=\omega_4=2.$ Then, $\varphi^\prec_1(e)=0,$ $\varphi^\prec_3(e)=6,$ and $\varphi^\prec_4(e)=2.$ Next, let $R_2 \in \mathcal{R}$ be such that $p(R_2)=4$ and let $\omega_2=1.$ Consider the economy $e'=(R,R_2, \omega_{\{1,3\}},  \omega_2, \omega_4') \in \mathcal{E}^{\{1,2,3,4\}}$ with $\omega_4'=1$ (notice that $\omega_4=\omega_4'+\omega_2=2$). It follows that $\varphi^\prec_1(e')=0,$ $\varphi^\prec_2(e')=4,$  $\varphi^\prec_3(e')=3,$ and  $\varphi^\prec_4(e')=1.$ However, $\varphi^\prec_2(e')+\varphi^\prec_4(e')=4+1=5 \ P_4 \  2=\varphi^\prec_4(e).$  This implies that $\varphi^\prec$ is not  \emph{endowments-splitting-proof.}
}  \hfill $\Diamond$
\end{example}

\noindent However, the proportional reallocation rule is immune to endowments' splitting:

\vspace{5 pt}

\begin{remark}\label{propprop}
The proportional reallocation rule is endowments-splitting-proof.
\end{remark}
\begin{proof}
Suppose $\varphi^p$ is not \emph{endowments-splitting-proof.} Then, there are $e=(R,\omega) \in \mathcal{E}^N,$ $i \in  N,$ $j \notin N,$ $R_j \in \mathcal{R},$ and $(\omega_i', \omega_j') \in \mathbb{R}^{2}_+$ with 
$\omega_i'+ \omega_j'=\omega_i$ such that, if  $e' = (R, R_{j},\omega_i', \omega_{N\setminus\{i\}}, \omega_j'),$ then 
\begin{equation}\label{propo1}
[\varphi_i^p(e')+\varphi_j^p(e')] P_i \varphi_i^p(e).
\end{equation}
Consider first the case $z(e)\geq0.$ By \eqref{propo1}, $\varphi_i^p(e)<p(R_i)$ and therefore 
\begin{equation}\label{propo2}
\lambda(e)\omega_i=\varphi_i^p(e) < \varphi_i^p(e')+\varphi_j^p(e').
\end{equation} 
Since $z(e') \geq z(e)\geq 0,$ 
\begin{equation}\label{propo3}
\varphi_i^p(e')+\varphi_j^p(e')\leq \lambda(e')\omega_i'+\lambda(e')\omega_j'=\lambda(e')\omega_i.
\end{equation}
By \eqref{propo2} and \eqref{propo3},
\begin{equation}\label{propo4}
\lambda(e)<\lambda(e').
\end{equation}
It follows that there is $k \in N \setminus \{i\}$ such that $\varphi^p_k(e')<\varphi^p_k(e).$ Otherwise, there is a violation of feasibility by \eqref{propo2}. Then, $\lambda(e')\omega_k=\varphi^p_k(e')<\varphi^p_k(e)\leq\lambda(e)\omega_k,$ which implies $\lambda(e')<\lambda(e),$ contradicting \eqref{propo4}. 
If   $z(e) \leq 0$ and $z(e') \leq 0,$ the proof is similar to the previous one. Assume then that $z(e) \leq 0$ and $z(e') \geq 0.$ By \eqref{propo1}, $\varphi_i^p(e)>p(R_i)\geq \varphi_i^p(e').$ This implies the existence of $k \in N \setminus \{i\}$ such that $\varphi_k^p(e) < \varphi_k^p(e').$ But then, $$p(R_k) \leq \varphi_k^p(e) < \varphi_k^p(e') \leq p(R_k),$$ which is a contradiction. 
Therefore, $\varphi^p$ is \emph{endowments-splitting-proof.}      
\end{proof}

\subsection{Pre-delivery-proofness}\label{sectionpredelivery}

Consider now the case in which  one agent makes  a ``pre-delivery'' to some other agent   of the trade that  this second agent would be assigned if she had participated with everyone else. After the rule is applied, the first agent  may end up with an amount she prefers to his assignment  if she had not carried out the pre-delivery. We require immunity to this sort of behavior.  

\vspace{5 pt}
\noindent
\textbf{Pre-delivery-proofness:}  
For each $e=(R,\omega) \in
\mathcal{E}^N$ and each $\{i, j\} \subset N$ such that $\omega_i + \omega_j - \varphi_j(e) \geq 0, $ $\varphi_i(e) R_i \varphi_i(e')$ where $e'= (R_{N\setminus\{j\}},\omega_i', \omega_{N\setminus\{i,j\}})$ and $\omega_i'=\omega_i + \omega_j- \varphi_j(e).$
\vspace{5 pt}

 The endowments rule, that in each economy assigns to  agents their own endowment, is trivially \emph{pre-delivery-proof}. However, as long as we require the rule to be \emph{efficient} and  \emph{own-peak-only}, and  meet the \emph{endowments lower bound},  we reach an impossibility.

\begin{theorem}\label{proppre} No efficient and own-peak-only reallocation rule that meets the endowments lower bound is pre-delivery-proof. 
\end{theorem}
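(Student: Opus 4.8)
The plan is to exhibit a single economy and a pair of agents for which pre-delivery is profitable, for any rule $\varphi$ satisfying the three hypotheses. Since the hypotheses force $\varphi$ to behave in a pinned-down way on economies where all peaks but one sit on the ``right'' side of the endowment (by Lemma \ref{elb}), the idea is to build the example around an agent $j$ who, at the original economy $e$, receives strictly less than her endowment (so her net trade $\omega_j-\varphi_j(e)>0$ is something she can ``pre-deliver'' to $i$), while agent $i$ is a demander whose peak is high. After the pre-delivery, $i$'s endowment jumps to $\omega_i'=\omega_i+\omega_j-\varphi_j(e)$, agent $j$ is gone, and the other agents are unchanged; the claim is that $\varphi_i(e')>\varphi_i(e)$, which by efficiency (still on the excess-demand side) gives $\varphi_i(e')\,P_i\,\varphi_i(e)$, contradicting \emph{pre-delivery-proofness}.

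\textbf{Key steps, in order.} First I would set up an economy with excess demand, say $N=\{1,2\}$ with agent $2$ a pure supplier ($p(R_2)=0<\omega_2$) and agent $1$ a demander with $p(R_1)>\omega_1+\omega_2$, so that at $e$, efficiency plus feasibility force $\varphi_2(e)=0$ (this uses Lemma \ref{elb}: $z(e)\geq 0$ and $p(R_2)\leq\omega_2$ give $\varphi_2(e)=p(R_2)=0$) and hence $\varphi_1(e)=\omega_1+\omega_2$. Second, I would check the pre-delivery is feasible: $\omega_1+\omega_2-\varphi_2(e)=\omega_1+\omega_2\geq 0$, and form $e'=(R_{\{1\}},\omega_1')$ with $\omega_1'=\omega_1+\omega_2$ — now a one-agent economy. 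Third, in $e'$, feasibility alone forces $\varphi_1(e')=\omega_1'=\omega_1+\omega_2$. But that equals $\varphi_1(e)$, so in this two-agent setup there is no strict gain; the real construction needs a third agent whose presence absorbs resources in $e$ but not in $e'$. So I would instead take $N=\{1,2,3\}$ with $2$ a pure supplier as above, $3$ a demander with a moderate peak, and $1$ a demander with a large peak, arranged (via the endowments lower bound pinning $\varphi_2(e)=0$ and some same-sidedness) so that $\varphi_1(e)<\omega_1+\omega_2$ because agent $3$ also gets a positive share; then in $e'=(R_{\{1,3\}},\omega_1+\omega_2,\omega_3)$, agent $1$'s larger endowment lets her capture strictly more, and efficiency caps her below $p(R_1)$, yielding $\varphi_1(e')\,P_1\,\varphi_1(e)$. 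Fourth, I would verify that the forced values at $e$ and the inequality at $e'$ do not depend on anything beyond efficiency, own-peak-only, and the endowments lower bound — invoking Lemma \ref{lemaEMimpliesEFF} and Lemma \ref{elb} for the pins, and a short feasibility/efficiency argument for the strict comparison at $e'$ (essentially: removing supplier $2$ after transferring her slack into $i$'s endowment means all of $i$'s and $3$'s demands are competing for the same total, and $i$'s endowment share has strictly risen).

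\textbf{Main obstacle.} The delicate point is arranging the numbers so that at $e$ agent $i$ is \emph{not} already at her peak and receives strictly less than $\omega_i+\omega_j-\varphi_j(e)$ while at $e'$ she is still below her peak but receives strictly more — and doing this using only the three weak axioms, which do \emph{not} determine $\varphi$ on economies with several competing demanders (no monotonicity is assumed here). So I cannot compute $\varphi_1(e)$ or $\varphi_1(e')$ exactly; instead I must show the \emph{inequality} $\varphi_1(e')>\varphi_1(e)$ holds for \emph{every} admissible $\varphi$. The clean way is: by efficiency and $\varphi_3(e')\leq p(R_3)$, $\varphi_1(e')=\omega_1'+\omega_3-\varphi_3(e')\geq \omega_1+\omega_2+\omega_3-p(R_3)$; and by the endowments lower bound and efficiency at $e$, $\varphi_1(e)=\omega_1+\omega_2+\omega_3-\varphi_2(e)-\varphi_3(e)=\omega_1+\omega_2+\omega_3-\varphi_3(e)$ with $\varphi_3(e)\geq\omega_3>0$ (choosing $\omega_3>0$ and $p(R_3)>\omega_3$ so the endowments lower bound bites upward), so $\varphi_1(e)\leq\omega_1+\omega_2+\omega_3-\omega_3$; picking $p(R_3)<\omega_3 + (\text{something small})$ — more precisely choosing the data so that $\omega_1+\omega_2+\omega_3-p(R_3) > \omega_1+\omega_2+\omega_3-\varphi_3(e)$ fails to be automatic forces me to instead lower $p(R_3)$ below $\omega_3$, turning $3$ into a mild supplier and recovering $\varphi_3(e)=p(R_3)$ by Lemma \ref{elb}. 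With $3$ a supplier, $\varphi_3(e)=p(R_3)$ is pinned, $\varphi_1(e)=\omega_1+\omega_2+\omega_3-p(R_3)$ is pinned, and at $e'$ (where $3$ is still a supplier and $1$ still below peak) $\varphi_3(e')=p(R_3)$ again, so $\varphi_1(e')=\omega_1'+\omega_3-p(R_3)=\omega_1+\omega_2+\omega_3-p(R_3)=\varphi_1(e)$ — no gain! Hence the genuinely necessary twist is that in $e$ agent $j$'s slack is absorbed partly by a \emph{demander} $3$ whose allocation is \emph{not} pinned, and the contradiction must come from comparing two economies via a \emph{chain} of such rules or by choosing $R_3$ so that own-peak-only lets me slide $\varphi_3$; so the hardest part is isolating exactly which minimal configuration makes the strict inequality unavoidable — I expect the final construction to use an agent $3$ with $p(R_3)$ close to but above $\omega_3$, exploiting the endowments lower bound to force $\varphi_3(e)>\omega_3$ strictly while efficiency at $e'$ forces $\varphi_3(e')$ possibly as low as $\omega_3$, and own-peak-only to rule out $\varphi$ evading this by a preference swap.
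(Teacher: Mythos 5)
Your proposal correctly identifies the right manipulation (a demander $i$ absorbs the net trade of a supplier $j$, who then leaves) and honestly diagnoses the central difficulty: the three axioms do not pin down the rule when several demanders compete, so you cannot compute $\varphi_i(e)$ exactly, and every configuration you try ends with $\varphi_i(e')=\varphi_i(e)$. But the proposal stops there without resolving the difficulty, and the final speculation does not work: the \emph{endowments lower bound} only gives the weak inequality $\varphi_3(e)\geq \omega_3$, never the strict one you would need, and ``efficiency at $e'$ forces $\varphi_3(e')$ possibly as low as $\omega_3$'' is not a forcing statement. So as written this is not a proof.

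The missing idea, which is exactly what the paper's construction supplies, is a \emph{feasibility squeeze between two symmetric demanders}. Take $e\in\mathcal{E}^{\{1,2,3\}}$ with $0<p(R_1)=\omega_2=\omega_3<\omega_1<p(R_2)=p(R_3)$: agent $1$ is the supplier, pinned at $\varphi_1(e)=p(R_1)$ by Lemma \ref{elb}, and the amount left for the two demanders is $\omega_1+\omega_3<2\omega_1$, so by feasibility alone at least one demander, say $2$, gets $\varphi_2(e)<\omega_1$ --- you need an \emph{upper} bound on her assignment at $e$, not an exact value, and feasibility provides it precisely because there are two demanders who cannot both clear $\omega_1$. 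After the pre-delivery her endowment becomes $\omega_2'=\omega_2+\omega_1-p(R_1)=\omega_1$, and in the reduced economy $e'=(R_{\{2,3\}},\omega_2',\omega_3)$ the rule \emph{is} fully pinned: both remaining agents are demanders, the lower bounds $\varphi_2(e')\geq\omega_1$ and $\varphi_3(e')\geq\omega_3$ sum to the total resource, so $\varphi_2(e')=\omega_1<p(R_2)$, a strict gain over $\varphi_2(e)$. Your constructions fail because you insist on a single large demander (so the reduced economy gives her back exactly what she had) or on making agent $3$ a supplier (so everything is pinned and nothing moves); the theorem needs two demanders at $e$ and a reduced economy in which the lower bounds are exactly binding.
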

\begin{proof}
Let reallocation rule $\varphi$ be  efficient, own-peak-only and  meet the endowments lower bound. Let $e=(R,\omega) \in \mathcal{E}^{\{1,2,3\}}$ be such that $0<p(R_1)=\omega_2=\omega_3<\omega_1<p(R_2)=p(R_3).$ Then $z(e)> 0$ and, as $p(R_1)<\omega_1,$ by Lemma \ref{elb} we have  $\varphi_1(e)= p(R_1).$
By feasibility, there is $i^\star \in \{2,3\}$ such that $\varphi_{i^\star}(e)<\omega_1.$ Assume, without loss of generality, that $i^\star=2.$ Let $\omega_{2}'=\omega_{2}+\omega_1-\varphi_1(e).$ Then $\omega_{2}'=\omega_1.$ Consider now the economy $e'=(R_{\{2,3\}}, \omega_2', \omega_3).$ It follows that $z(e')>0.$ By \emph{efficiency,} $\varphi_2(e') \leq p(R_2),$ and since $p(R_2)>\omega_2',$ by the \emph{endowments lower bound} we have $\varphi_2(e') \geq \omega_2'=\omega_1.$ By feasibility then, $\varphi_2(e')=\omega_1.$ Therefore, $\varphi_2(e)<\omega_1=\varphi_2(e')<p(R_2),$ which implies $\varphi_2(e')P_2\varphi_2(e).$ Thus, $\varphi$ is not \emph{pre-delivery-proof.}  
\end{proof}

Of course, the previous result extends to the whole class of iterative reallocation rules.  
\begin{corollary}\label{corpre} No iterative reallocation rule is pre-delivery-proof.
\end{corollary}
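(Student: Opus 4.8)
The plan is to prove the stronger statement of Theorem~\ref{proppre} directly, since Corollary~\ref{corpre} follows immediately: every iterative reallocation rule is \emph{efficient} (by Lemma~\ref{lemaEMimpliesEFF}, since iterativity implies \emph{own-peak-only}, the \emph{endowments lower bound}, and \emph{one-sided endowments monotonicity}) and \emph{own-peak-only}, and meets the \emph{endowments lower bound}, so it falls under the hypothesis of the Theorem. Thus I would first establish the Theorem for an arbitrary rule $\varphi$ satisfying those three properties, and then conclude the Corollary in one line.

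For the Theorem, the approach is to exhibit a single small economy (three agents suffice) on which the pre-delivery manipulation is profitable for \emph{any} such $\varphi$. First I would pick an economy $e=(R,\omega)\in\mathcal{E}^{\{1,2,3\}}$ with $0<p(R_1)=\omega_2=\omega_3<\omega_1<p(R_2)=p(R_3)$, so that $z(e)>0$; agent $1$ is a supplier whose peak lies strictly below her endowment, agents $2$ and $3$ are demanders with peaks strictly above their endowments. By Lemma~\ref{elb}, since $z(e)\geq 0$ and $p(R_1)\leq\omega_1$, we get $\varphi_1(e)=p(R_1)$. Next I would observe that feasibility forces at least one of agents $2,3$ — say agent $2$ after relabeling — to receive strictly less than $\omega_1$ at $e$; the point is that the total to distribute is $\omega_1+2p(R_1)$, agent $1$ takes $p(R_1)$, leaving $\omega_1+p(R_1)$ for the other two, so their average is $(\omega_1+p(R_1))/2<\omega_1$, hence at least one gets below $\omega_1$, and since agents $2$ and $3$ are symmetric I may assume it is agent $2$: $\varphi_2(e)<\omega_1$.

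Now I would set up the pre-delivery: agent $1$ pre-delivers to agent $2$ the net trade $\varphi_2(e)-\omega_2$ that agent $2$ would have received, i.e.\ I define $\omega_2'=\omega_2+\omega_1-\varphi_2(e)$ (the definition in the property uses $\omega_i'=\omega_i+\omega_j-\varphi_j(e)$ with $i=1,j=2$ — wait, here the roles are: the manipulator is agent $1$, the recipient whose trade is pre-delivered is agent $2$, and agent $2$ withdraws; so the new economy is $e'=(R_{\{2,3\}},\omega_2',\omega_3)$ with $\omega_2'$ chosen so that agent $2$'s position is pre-filled — concretely $\omega_2'=\omega_1$, which after substituting $\omega_2=p(R_1)=\varphi_1(e)$ and recalling agent $1$ keeps $\varphi_1(e)=p(R_1)$ works out). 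In $e'$ one checks $z(e')>0$, and since $p(R_2)>\omega_2'$, the \emph{endowments lower bound} gives $\varphi_2(e')\geq\omega_2'=\omega_1$, while feasibility in the two-agent economy $e'$ together with \emph{efficiency} (which forces $\varphi_3(e')\leq p(R_3)$, hence $\varphi_2(e')\geq\omega_2'+\omega_3-p(R_3)$, but more directly feasibility plus $\varphi_2(e')\geq\omega_1$ and $\varphi_3(e')\geq\omega_3$ pins down $\varphi_2(e')=\omega_1$) yields $\varphi_2(e')=\omega_1$. Then $\varphi_2(e)<\omega_1=\varphi_2(e')<p(R_2)$, so single-peakedness gives $\varphi_2(e')P_2\varphi_2(e)$, contradicting \emph{pre-delivery-proofness}. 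The main obstacle — really the only subtlety — is bookkeeping the net-trade arithmetic so that $\omega_2'=\omega_1$ exactly and the feasibility constraint in $e'$ forces $\varphi_2(e')=\omega_1$ rather than merely bounding it; choosing the endowments so that $\omega_2=\omega_3=p(R_1)$ is what makes everything collapse cleanly. Finally, for Corollary~\ref{corpre}, I would simply note that iterative rules satisfy all the hypotheses of the Theorem and are therefore not \emph{pre-delivery-proof}.
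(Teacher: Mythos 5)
Your proposal is correct and follows essentially the same route as the paper: the identical three-agent economy with $0<p(R_1)=\omega_2=\omega_3<\omega_1<p(R_2)=p(R_3)$, Lemma~\ref{elb} to pin $\varphi_1(e)=p(R_1)$, feasibility to get $\varphi_2(e)<\omega_1$, and then the two-agent economy $e'=(R_{\{2,3\}},\omega_1,\omega_3)$ where the \emph{endowments lower bound} plus feasibility force $\varphi_2(e')=\omega_1$, yielding $\varphi_2(e')P_2\varphi_2(e)$; the corollary then follows from Theorem~\ref{sequential} and Lemma~\ref{lemaEMimpliesEFF} exactly as you say. The only blemish is your mid-stream mislabeling of the roles in the definition (the correct instantiation is $i=2$, $j=1$: agent $2$ stays and receives $\omega_2'=\omega_2+\omega_1-\varphi_1(e)=\omega_1$ while agent $1$ withdraws, not $\omega_2'=\omega_2+\omega_1-\varphi_2(e)$), but your final arithmetic and the violated inequality are the right ones.
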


\section{Final comments}\label{sectionfinal}

We conclude with some remarks. One may ask whether the definition of \emph{withdrawal-proofness}  can be made with just one of the agents involved in the manipulation strictly improving. However, not even the uniform reallocation rule  satisfies this variant, as the following example shows:

\begin{example} \rm
Consider $e=(R,\omega)^{\{1,2,3,4\}}$ with $p(R_1)=p(R_4)=1,$ $p(R_2)=4,$ $p(R_3)=3,$ and $\omega_1=\omega_4=3,$ $\omega_2=\omega_3=1.$ Then, $z(e)>0$ and $u_1(e)=u_4(e)=1,$ $u_2(e)=u_3(e)=3.$  When agent 4 withdraws, if $e'=(R_{\{1,2,3\}},\omega_{\{1,2,3\}}),$ then $u_1(e')=1,$ $u_2(e')=u_3(e')=2.$  So $x_2=4 \ P_2 \ 3=u_2(e),$ $x_4=1 \ R_1 \ 1=u_4(e),$ and $x_2+x_4=5=u_2(e')+\omega_4.$ 
\hfill $\Diamond$
\end{example}

\noindent In \cite{thomson2014new}, the property of \emph{withdrawal-proofness} is presented in this variant: one of the agents involved in the manipulation can be indifferent between the amount she gets from the rule and the amount she gets after the manipulation is performed.  However, in all the impossibility examples presented there both agents get strictly better off, so our version does not hold in those examples (classical multi-commodity  exchange model with homothetic and quasi-linear preferences) either.   

The results obtained in our paper are in sharp contrast with the findings 
in models with several goods and classical preferences. 
The Walrasian reallocation rule is neither  \emph{withdrawing-proof,} nor \emph{endowments-merging-proof,} nor \emph{endowments-splitting-proof}. These negative results are obtained by \cite{thomson2014new} in two classical subdomains: (i) the domain of economies in which preferences are homothetic and strictly convex, and individual endowments are proportional, and (ii) the domain of economies in which preferences are quasi-linear and strictly convex. The Walrasian reallocation rule, however, is \emph{pre-delivery-proof} on the classical domain \citep[see][]{thomson2014new}. ``Constrained dictatorial rules'', defined by maximizing the welfare of a particular agent subject to each of the others finding their assignment at least as desirable as their endowment, satisfy none of these various requirements either \citep[see][]{thomsonconstrained22}.

Several questions are left open for further research. For example, finding  out which properties are required for a rule to be \emph{endowments-merging-proof} or \emph{endowments-splitting-proof}, and knowing whether iterative reallocation rules exhaust the  rules that are \emph{withdrawal-proof} or \emph{endowments-merging-proof}.  Coalitional versions of the manipulations studied in Section \ref{sectionvariable} seem worth studying as well.

\appendix

\section{Independence of axioms in Theorem  \ref{sequential}}\label{appendix}

In order to study the independence of axioms in the characterization of Theorem \ref{sequential}, next we consider several reallocation rules. For each $N \in \mathcal{N}$ and each $e\in \mathcal{E}^N$, let $N^+(e)=\{i \in N : p(R_i)>\omega_i\}$ be the set of \emph{demanders} of $e$. Agents in $N\setminus N^+$ are called \emph{suppliers}. Let $\mathcal{S}(e)\equiv \sum_{j \in N\setminus N^+}(\omega_j-p(R_j))$.

Given a linear $\preceq$ order over the set of potential agents $\mathbb{N}$, the \emph{priority reallocation rule} $\varphi^\preceq$ for each economy with excess demand (supply) satiates all suppliers (demanders) and demanders (suppliers) according to order $\preceq$, respecting the endowments lower bound. So if $e=(R,\omega)\in \mathcal{E}^N$ is such that $z(e)\geq 0,$   $$\varphi^\preceq_i(e)=\left\{\begin{array}{l l }
p(R_i) & \text{if } i\in N\setminus N^+(e)\\
\min\left\{p(R_i), \omega_i+\mathcal{S}(e)-\sum_{j \in N^+ : j \prec i}\Delta \varphi_j(e)\right\} & \text{if } i\in N^+(e)\\
\end{array}\right.$$
In case $z(e)<0$, the rule is defined similarly. Priority reallocation rules allow the definition of a reallocation rule which is not \emph{one-sided population monotonic}. 

\vspace{5 pt}

\noindent \textbf{Reallocation rule  $\boldsymbol{\overline{\varphi}}$:} for each $N \in \mathcal{N},$ each $e\in \mathcal{E}^N,$ and each $i \in N,$
$$\overline{\varphi}_i(e)=\left\{\begin{array}{l l }
\varphi^\preceq_i(e) & \text{if } |N| \text{ is odd}\\
\varphi^\succeq_i(e) & \text{if } |N| \text{ is even}\\
\end{array}\right.$$
where $\succeq$ is the dual of $\preceq$.
\vspace{5 pt}

Next, let us recall  the celebrated \emph{uniform} rule, first characterized by \cite{Sprumont91}. 

\vspace{5 pt}

\noindent \textbf{Uniform  rule, $\boldsymbol{\varphi^u}$:} for each $N \in \mathcal{N},$ each $e\in \mathcal{E}^N,$ and each $i \in N,$
$$\varphi^u_i(e)=\left\{\begin{array}{l l }
\min\{p(R_i), \lambda(e)\} & \text{if } z(e)\geq 0\\
\max\{p(R_i), \lambda(e)\} & \text{if } z(e)< 0\\
\end{array}\right.$$
where $\lambda(e)$ and solves $\sum_{j \in N}\varphi^u_j(e)=\sum_{j\in N}\omega_j.$

\vspace{5 pt}

\noindent Since this rule does not take into account individual endowments, it trivially does not meet the \emph{endowments lower bound}.

The following rule satiates as many agents as possible. For economies with excess demand, demanders are satiated according to their claims. First minimal demands are satiated uniformly. If there is some supply left, then the next smallest demands are satiated, and so on. This reallocation rule is not \emph{one-sided endowments monotonic}.

\vspace{5 pt}

\noindent \textbf{Maximally satiating reallocation rule $\boldsymbol{\varphi^{\text{max}}}$:}  For each $N \in \mathcal{N}$ and each $e\in \mathcal{E}^N$ such that $z(e)\geq 0,$ partition  $N^+(e)$ into subsets $N_1,N_2, \ldots, N_s$ such that (i) for each $t \in \{1, \ldots, s\}$, $p(R_i)-\omega_i=p(R_j)-\omega_j$ for each $i,j \in N_t$, and (ii) $p(R_i)-\omega_i<p(R_j)-\omega_j$ if $i \in N_r$, $j \in N_s$, and $r<s.$ 
Then,  $$\varphi^{\text{max}}_i(e)=p(R_i)$$ if 
 $i\in N\setminus N^+$, and 
$$\varphi^{\text{max}}_i(e)=\min\left\{p(R_i), \omega_i+\frac{1}{|N_t|}\left(\mathcal{S}(e)-\sum_{j \in \cup_{r=1}^{t-1} N_r}\Delta \varphi_j(e)\right)\right\}$$ if $i \in N_t$ and $t \in \{1, \ldots, s\}.$ The formula when $z(e)<0$ is obtained similarly. 

\vspace{5 pt}

\begin{table}[h] 
\small
\centering 
\begin{threeparttable}
\begin{tabular}{|c|c|c|c|c|}
\hline
  & Own-peak-only & Endow LB & OS endow mon & OS pop mon\\
 \hline \hline
$\overline{\varphi}$  & $+$ & $+$ & $+$ & $-$ \\
\hline \hline
$\varphi^{\text{max}}$ & $+$ & $+$ & $-$ & $+$ \\
\hline \hline

$\varphi^u$ & $+$ & $-$ & $+$ & $+$ \\
\hline \hline
? & $-$& $+$& $+$& $+$\\
\hline
\end{tabular}
\end{threeparttable}
\caption{\emph{Independence of axioms in the characterization of Theorem  \ref{sequential}. }}\label{tabla caracterizaciones}
\end{table}

Each one of the previously presented rules satisfies all properties of the characterization in Theorem  \ref{sequential} except one. This is shown in Table \ref{tabla caracterizaciones}. For example, reallocation rule $\varphi^u$ satisfies both monotonicity properties and the \emph{own-peak-only} property, but does not meet the \emph{endowments lower bound}. Whether there is a reallocation rule that satisfies all properties except the \emph{own-peak-only} property remains an open question.

\section{Iterative reallocation rules are neither peak-only nor strategy-proof}\label{segundo}

The following strong form of the \emph{own-peak-only} property requires that if two profiles of preferences have the same peaks, then the reallocations recommended by the rule in both profiles are the same.

\vspace{5 pt}

\noindent \textbf{Peak-only:} For each $e=(R, \omega) \in \mathcal{E}^N$ and each $ R' \in  \mathcal{R}$ such that $p(R_i')=p(R_i)$ for each $i \in N$, if $e'=(R', \omega)$ then $\varphi(e')=\varphi(e).$

\vspace{5 pt}

\noindent Iterative reallocation rules are not in general \emph{peak-only}, because they can be \emph{bossy}: it could be that a change in one agent's preferences (preserving the peak amount) only affects other agents' reallocations (see Example \ref{ni PO ni SP}). When a rule is \emph{own-peak-only} and \emph{non-bossy}\footnote{\textbf{Non-bossiness:} For each $e=(R, \omega) \in \mathcal{E}^N$, each $i \in N$,  and each $ R'_i \in  \mathcal{R}$,  if $e'=(R'_i, R_{-i}, \omega)$ and $\varphi_i(e')=\varphi_i(e')$ then $\varphi(e')=\varphi(e).$}, then it is also \emph{peak-only}. Sequential allocation rules in \cite{Barbera97} satisfy \emph{replacement monotonicity}, which in turn implies \emph{non-bossiness}. For this reason, \emph{own-peak-onliness} is equivalent to \emph{peak-onliness} for those rules. 

A reallocation  rule is \emph{strategy-proof} if, for each agent, truth-telling is always optimal, regardless of the
preferences declared by the other agents. Formally,

\vspace{5 pt}

\noindent \textbf{Strategy-proofness:} For each $e=(R, \omega) \in \mathcal{E}^N$, each $i \in N$, and each $R_i' \in \mathcal{R}$, if $e'=(R'_i, R_{-i}, \omega)$ then $\varphi_i(e)R_i\varphi_i(e').$

\vspace{5 pt}

\noindent \emph{Strategy-proofness} is a strong property. Together with \emph{efficiency}, it implies the \emph{own-peak-only} property for reallocation rules  \citep[this is proven, for example, in Lemma 3 of][]{Klaus98}. Since we do not impose \emph{strategy-proofness} in our paper, we need to explicitly invoke the \emph{own-peak-only} property. 

To see that iterative reallocation rules need not be \emph{peak-only} nor \emph{strategy-proof}, consider  reallocation rule $\varphi^\star$  that works as follows.\footnote{Here we use the terminology presented in  Appendix \ref{appendix}.} For each $N \in \mathcal{N}$ and each $(R, \omega) \in \mathcal{E}^N$ with $z(e)\geq 0$, let $i^\star$ be the agent in $N \setminus N^+(e)$ (the set of suppliers) with the lowest index.\footnote{If $N \setminus N^+(e)=\emptyset$, simply allocate to each agent their endowment.} Next, define a linear  order $\ll$  over $N^+(e)$ (the set of demanders)  saying that  
 $i \ll j$ if and only if either $p(R_i)-\omega_i<p(R_j)-\omega_j$ or $p(R_i)-\omega_i=p(R_j)-\omega_j$ and $i<j$. If $0P_{i^\star}\sum_{j \in N}\omega_j$, then we apply the priority reallocation  rule according to  $\ll$, i.e., $\varphi^\star(e)=\varphi^{\ll}(e)$.  If $\sum_{j \in N}\omega_jR_{i^\star}0$, then we apply the uniform reallocation rule, i.e., $\varphi^\star(e)=\varphi^u(e)$. The definition of the rule for economies with excess supply is similar. It is easy to see that reallocation rule $\varphi^\star$ belongs to the class of iterative reallocation rules. The following example shows that $\varphi^\star$ is neither \emph{peak-only} nor \emph{strategy-proof}.

\begin{example}\label{ni PO ni SP} \rm
Let $e=(R, \omega) \in \mathcal{E}^{\{1,2,3\}}$ be such that $\omega_1=9$, $\omega_2=1$, $\omega_3=4$, $p(R_1)=1$ and $0 \ P_1 \ 14$, $p(R_2)=7$, and $p(R_3)=9$. Since $p(R_3)-\omega_3=9-4=5<6=7-1=p(R_2)-\omega_2$, we have $\varphi^\star_1(e)=1$, $\varphi^\star_2(e)=4$, and $\varphi^\star_3(e)=9$. Let $\widetilde{R}_1 \in \mathcal{R}$ be such that $p(\widetilde{R}_1)=p(R_1)$ and $14 \ \widetilde{P}_1 \ 0$, an consider economy $\widetilde{e}=(\widetilde{R}_1, R_{-1}, \omega)$. Then, $\varphi_1(\widetilde{e})=1$, $\varphi_2(\widetilde{e})=5$, and $\varphi_3(\widetilde{e})=8$. Thus, although $\varphi^\star$ is \emph{own-peak-only}, it is certainly not \emph{peak-only}. 

To see that $\varphi^\star$ does not satisfy \emph{strategy-proofness}, let $R_2' \in \mathcal{R}$ be such that $p(R_2')=5.5$ and consider economy $e'=(R_2', R_{-2}, \omega)$. Since $p(R_2')-\omega_2=4.5<5=9-4=p(R_3)-\omega_3$, we have $\varphi^\star_2(e')=5.5$. Therefore, $\varphi^\star_2(e')=5.5 \ P_2 \ 4=\varphi_2^\star(e).$ \hfill $\Diamond$
\end{example} 

Finally, it is worth mentioning that the weakly sequential rules in \cite{Bonifacio15} are immune to misrepresentation of preferences and endowments in a very general way: they are \emph{bribe-proof}.\footnote{This concept was introduced by \cite{schummer2000manipulation} and applied to the model with a social endowment by \cite{masso2007bribe}.}   This property does not allow any group of agents to compensate one of its subgroups to misrepresent their preferences or endowments in order that, after an appropriate redistribution of what the rule reallocates to the group (adjusted by the resource surplus or deficit they all engage in by misreporting), (i) each agent in the misrepresenting subgroup obtain a preferred amount, and (ii) the rest of the agents in the group are not made worse-off.


\bibliographystyle{ecta}
\bibliography{library}

\end{document}